\theoremstyle{definition}
\newtheorem{example}{Example}
\newtheorem{definition}{Definition}
\newtheorem{task}{Task}
\newtheorem{conjecture}[task]{Conjecture}
\theoremstyle{plain}
\newtheorem{theorem}{Theorem}
\newtheorem{lemma}{Lemma}
\newtheorem{proposition}{Proposition}
\newtheorem{corollary}{Corollary}
\theoremstyle{remark}
\newtheorem{remark}{Remark}
\theoremstyle{definition}
\def\oo{\infty}                   \def\d{\,\mathrm{d}}
\def\lm{\lambda}                  \def\th{\theta}
\def\N{\mathbb{N}}
\def\P{\mathbb{P}}
\def\p{\mathbb{P}}
\def\E{\mathbb{E}}
\def\R{\mathbb{R}}
\def\X{\mathcal{X}}
\def\var{\mathrm{var}}
\def\X{\mathcal{X}}
\def\d{\mathrm{d}}
\def\oo{\infty}
\renewcommand{\(}{\left(}
\renewcommand{\)}{\right)}
\def\id{\mathds{1}}
\begin{document}

\title{Diversification for infinite-mean Pareto models without risk aversion}


\author{Yuyu Chen\thanks{Department of Economics, University of Melbourne,  Australia. \Letter~{\scriptsize\url{yuyu.chen@unimelb.edu.au}}}
\and Taizhong Hu\thanks{Department of Statistics and Finance, University of Science and Technology of China,  China. \Letter~{\scriptsize\url{thu@ustc.edu.cn}}}
\and Ruodu Wang\thanks{Department of Statistics and Actuarial Science, University of Waterloo,  Canada. \Letter~{\scriptsize\url{wang@uwaterloo.ca}}}
\and Zhenfeng Zou\thanks{Department of Statistics and Finance, University of Science and Technology of China,  China. \Letter~{\scriptsize\url{zfzou@ustc.edu.cn}}}
}
\date{\today}
\maketitle

\begin{abstract}

We study stochastic dominance between portfolios of independent and identically distributed (iid) extremely heavy-tailed (i.e., infinite-mean) Pareto random variables.  
With the notion of majorization order, we show that a more diversified portfolio of iid extremely heavy-tailed Pareto random variables is larger in the sense of first-order stochastic dominance. This result is further generalized for Pareto random variables caused by triggering events, random variables with tails being Pareto, bounded Pareto random variables, and positively dependent Pareto random variables. These results provide an important implication in investment: Diversification of extremely heavy-tailed Pareto profits uniformly increases investors' profitability, leading to a diversification benefit. 
Remarkably, different from the finite-mean setting, such a diversification benefit does not depend on the decision maker's risk aversion.

\textbf{Keywords}: Pareto distributions; portfolio diversification; first-order stochastic dominance; majorization order. 
\end{abstract}

\section{Introduction}

Pareto distributions are ubiquitous for modeling heavy-tailed phenomena in diverse fields;  \cite{andriani2007beyond} presented over 80 applications of Pareto distributions. Their applications in economics, finance, and insurance can be found in, e.g., \cite{A15}, \cite{MFE15} and \cite{ibragimov2015heavy}. 
Among these applications, some datasets modelled by Pareto distributions exhibit extremely heavy tails, leading to infinite first moments. Such examples include financial returns from some technological innovations \citep{scherer2001uncertainty,silverberg2007size}, corporate innovative outputs \citep{choi2020power},  catastrophic losses  \citep{IJW09,hofert2012statistical}, and operational losses \citep{moscadelli2004modelling}.   A list of empirical examples of infinite-mean Pareto models is summarized by \cite{CEW24(2)}.

Diversification is a fundamental concept in finance and economics, pioneered by \cite{M52}. 
At a high level, when it comes to decision making in portfolio selection,  it is commonly agreed  
that risk-averse agents (in the sense of \cite{RS70}) would prefer more diversification, whereas risk-seeking agents
do not like diversification; e.g., \cite{S67} and \cite{FM72}. Such results are typically formulated for finite-mean models of investment payoffs, but for infinite-mean models, notions of risk aversion and risk seeking, which rely on expected utility functions, may not behave well and can be difficult to interpret.
Intriguing or even counter-intuitive results may be obtained for infinite-mean models, as recently studied by \cite{CEW24}.

This paper focuses on the portfolio diversification of extremely heavy-tailed, i.e., infinite-mean, Pareto random variables.
For two random variables $X$ and $Y$, we say that $X$ is smaller than $Y$ in \emph{first-order stochastic dominance}, denoted by $X\le_{\rm st}Y$, if $\p\(X\le x\)\ge \p\(Y\le x\)$ for all $x\in\R$. First-order stochastic
dominance is the strongest commonly used stochastic dominance ordering. For iid extremely heavy-tailed Pareto random variables $X_{1},\dots,X_{n}$ and a nonnegative vector $\(\theta_{1},\dots,\theta_n\)  $, \cite{CEW24} showed  
\begin{equation}
 \label{eq:*}
      \theta X_{1}\le_{\rm st}\theta_{1}X_{1}+\dots+\theta_{n}X_{n} \mbox{~~~~where $\theta =\sum_{i=1}^n\theta_n $.}
\end{equation}
Intuitively,  \eqref{eq:*} suggests that a diversified portfolio of iid extremely heavy-tailed Pareto random variables has a larger tail probability than a non-diversified portfolio.
If $X_1,\dots,X_n$ represent profits, then any  decision maker consistent with $\le_{\rm st}$ will choose the diversified portfolio represented by $(\theta_1,\dots,\theta_n)$ over the non-diversified one represented by $(\theta,0,\dots,0)$.
The situation will be completely flipped if $X_1,\dots,X_n$ represent losses; the non-diversified portfolio is now preferred.
Therefore, whether agents are risk-averse or not does not matter for the decision, but whether the infinite-mean random variables represent gains or losses matters.

There is a clear limitation of \eqref{eq:*}: It only allows for the comparison of a general portfolio with a completely non-diversified portfolio. To understand decisions in portfolio selection, 
one needs to compare a portfolio with another one, possibly also diversified, but to a lesser extent. This problem was not addressed by the results of \cite{CEW24}. Thus, we answer the following question in this paper: 
What is the stochastic dominance relation between two different diversified portfolios of iid extremely heavy-tailed Pareto random variables? 

Let $\(\theta_{1},\dots,\theta_n\)$ and $\(\eta_1, \dots, \eta_n\)\in\R_+^n$ be the exposure vectors of the two portfolios  such that $\(\theta_1, \dots, \theta_n\)$ is smaller than $\(\eta_{1}, \dots, \eta_n\)$ in majorization order (see Section \ref{sec:2} for the definition of majorization order). This implies that the two vectors have equal total exposures (i.e., $\sum_{i=1}^n\theta_i=\sum_{i=1}^n\eta_i$), making it fair to compare the two portfolios. Moreover, the components of a smaller vector in majorization order are less ``spread out''; that is, the components of $\(\theta_1, \dots,\theta_n\)$ are closer to each other than those of $\(\eta_1, \dots,\eta_n\)$.  Therefore,  it is natural to consider the portfolio with exposure vector $\(\theta_1, \dots,\theta_n\)$ as more diversified than that with exposure vector $\(\eta_1, \dots,\eta_n\)$. One of our main results, presented in Section \ref{sec:2}, is the following inequality 
\begin{align}
 \label{eq:main}
      \eta_1 X_1+\dots+\eta_n X_n \le_{\rm st} \theta_1 X_1 +\dots+\theta_n X_n.
\end{align}
Inequality \eqref{eq:main}, in the sense of the strongest form of risk comparison,  shows that a more diversified portfolio of iid extremely heavy-tailed Pareto random variables leads to a larger tail probability. Note that \eqref{eq:main} includes \eqref{eq:*} as a special case.

In Section \ref{sec:3}, we generalize \eqref{eq:main} into several models that can be more relevant in different settings. In particular, in Theorem \ref{th:cata}, extremely heavy-tailed Pareto random variables are assumed to be caused by triggering events, which usually happen with small probabilities. Proposition \ref{prop:tail} considers iid random variables whose tails follow an extremely heavy-tailed Pareto distribution. Proposition \ref{prop:bounded} studies extremely heavy-tailed Pareto distributions with an upper bound. Theorem \ref{thm:Clayton} deals with extremely heavy-tailed Pareto random variables that are positively dependent and modelled by a Clayton copula.

The implications of \eqref{eq:main} in investment decisions are discussed in Section \ref{sec:4}. As expected, for agents making investments  whose profits are iid extremely heavy-tailed Pareto random variables,  diversification uniformly improves their preferences as long as the preferences are monotone and well-defined. In particular, they do not need to be risk averse. 
As such improvements are strict, the optimal investment strategy is to assign an equal weight to each investment in the portfolio. This strategy, known as the $1/n$ rule, is also optimal for risk-averse expected utility agents if the profits have finite mean.
On the other hand, if the random variables represent losses, then the optimal decision is the opposite, as implied by \eqref{eq:*}.
Section \ref{sec:5} concludes the paper.

There are some studies on the diversification of extremely heavy-tailed random variables (see, e.g., \cite{embrechts2002correlation} and \cite{ibragimov2009portfolio}), and most of their findings concern how diversification affects the spreadness of portfolios and thus are different from \eqref{eq:main}.  
Although \eqref{eq:*}  is a special case of \eqref{eq:main}, our technical results are  quite different from those of \cite{CEW24}. First, the results in the latter paper do not address the comparison between diversified portfolios with different weight vectors, and thus they are weaker than ours when formulated with the same dependence assumption.
Our proofs are distinct
from \cite{CEW24} and rely on analysis of majorization order and probabilistic inequalities.
On the other hand, \cite{CEW24} showed \eqref{eq:*} under some form of negative dependence. It remains unclear whether our main results can be generalized to accommodate negative dependence, due to distinct mathematical techniques used in the proofs.
Instead, we establish a result under a specific form of positive dependence in Section \ref{sec:3.4}. Another related result is obtained by \cite{ibragimov2005new}, which implies that \eqref{eq:main} holds for iid positive one-sided stable random variables with infinite mean.  The techniques of \cite{ibragimov2005new} rely on properties of stable distributions, which are very different from ours, as Pareto distributions do not belong to that class.

We conclude this section by fixing some notation. Throughout, random variables are defined on an atomless probability space $\(\Omega,\mathcal F,\p\)$. Denote by $\mathbb N$ the set of all positive integers and $\R_+$ the set of non-negative real numbers. For $n\in\N$, write $[n]=\{1,\dots,n\}$. For vectors $\bm x=\(x_1,\dots,x_n\)\in\R^n$ and $\bm y=\(y_1,\dots,y_n\)\in\R^n$, their dot product is $\bm x\cdot \bm y=\sum_{i=1}^nx_iy_i$. For $\bm x= \(x_1,\dots,x_n\)\in\R^n$, denote by $\|\bm x\|=\sum_{i=1}^n|x_i|$. For $x,y\in \R$, write $x\wedge y= \min\{ x,y\}$, $x\vee y= \max\{ x,y\}$, and $x_+=\max\{x,0\}$. We write $\bm 1_n=(1,\ldots, 1)\in\R^n$ for $n\in\N$. For a random variable $X$ with distribution $F$, its quantile is defined as
\begin{equation*}
 \label{VaR}
 F^{-1}(p) :=\inf\left\{t\in\R:F(t)\ge  p\right\},~~~ p\in(0,1).
\end{equation*}
In this paper, terms like ``increasing" and ``decreasing" are in the non-strict sense.

\section{Diversification of iid  Pareto random variables}
\label{sec:2}

\subsection{Majorization order}

For $\theta,\alpha>0$, the Pareto distribution is  specified by the cumulative distribution function 
\begin{align*}
  P_{\alpha,\theta}(x) = 1 -\left(\frac{\theta}{x}\right)^{\alpha},\quad x\ge \theta.
\end{align*}
The parameter $\theta$ is a scale parameter (i.e.,  for $X\sim P_{\alpha,1}$, we have $\theta X\sim P_{\alpha,\theta}$) and $\alpha$ is the tail parameter. The mean of $P_{\alpha,\theta}$ is infinite if and only if $\alpha\in (0,1]$. Clearly, it suffices to consider the $P_{\alpha,1}$ distribution which we write as $\mathrm {Pareto}(\alpha)$.  We say that the $\mathrm {Pareto}(\alpha)$ distribution is \emph{extremely heavy-tailed} if $\alpha\le 1$, and it is \emph{moderately heavy-tailed} if $\alpha >1$. A profit or loss that follows a $\mathrm {Pareto}(\alpha)$ distribution will be referred to as \emph{Pareto profit} or \emph{Pareto loss}, respectively.

We recall the notions of $T$-transform and majorization order from \cite{MOA11}. 
\begin{definition}
\label{def:equitable}
Let $\bm {\theta}=\(\theta_1,\dots,\theta_n\)$ and $\bm {\eta} = \(\eta_1,\dots,\eta_n\)$ be two vectors in $\R^n$.
\begin{enumerate}[(i)]
  \item $\bm {\theta}$ is a \emph{$T$-transform} of $\bm {\eta}$ if for some $\lambda \in [0,1]$ and $i,j\in [n]$ with $i \neq j$,
	\begin{equation}\label{eq:t-tran}
	    \theta_i=\lambda \eta_i + (1-\lambda)\eta_j,~\theta_j= (1-\lambda)\eta_i+\lambda \eta_j,
	\end{equation}
	and $\theta_k=\eta_k$ for $k\in [n]\setminus \{i,j\}$.

  \item  $\bm \theta$ is dominated by $\bm \eta$ in \emph{majorization order}, denoted by $\bm \theta \preceq \bm \eta$, if $\sum^n_{i=1}\theta_i =\sum^n_{i=1}\eta_i$ and
      \begin{equation*}
         \sum^k_{i=1} \theta_{(i)} \ge \sum^k_{i=1} \eta_{(i)}\ \ \mbox{for}\ k\in [n-1],
      \end{equation*}
 
      where $\th_{(i)}$ and $\eta_{(i)}$ represent the $i$th smallest order statistics of $\bm \th$ and $\bm \eta$, respectively. We write $\bm \theta \prec \bm\eta$ if $\bm\theta \preceq \bm\eta$ and $\bm\theta \neq \bm\eta$.
\end{enumerate}
\end{definition}
The lemma below will be used to establish our main result.
\begin{lemma}
 \label{lem:maj}
{\rm \citep[][Section 1.A.3]{MOA11}}\ \
For $\bm\theta, \bm \eta\in \R^n$, $\bm \th\preceq \bm \eta$ if and only if $\bm \theta$ can be obtained from $\bm\eta$ by a finite number of $T$-transforms.
\end{lemma}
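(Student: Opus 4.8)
The plan is to prove both implications directly, after one cosmetic reduction. Since majorization depends on $\bm\theta$ and $\bm\eta$ only through their order statistics, and since a transposition of two coordinates is exactly the $T$-transform with $\lambda=0$ (so every permutation is a finite product of $T$-transforms), it suffices to establish the claim when both vectors are arranged in decreasing order: one then prepends the permutation that sorts $\bm\eta$ and appends the permutation that unsorts $\bm\theta$, each a finite string of $T$-transforms. Throughout I would write the majorization condition equivalently (via complementation with the fixed total $\sum_i\theta_i=\sum_i\eta_i$) as $\sum_{i=1}^m\theta_i\le\sum_{i=1}^m\eta_i$ for the decreasingly sorted coordinates, with equality at $m=n$.

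For the ``if'' direction I would note that a single $T$-transform on coordinates $i,j$ with parameter $\lambda$ is left multiplication by $\lambda I+(1-\lambda)\Pi_{ij}$, where $\Pi_{ij}$ transposes coordinates $i$ and $j$. This matrix is doubly stochastic, so the elementary direction of the Hardy--Littlewood--P\'olya theorem -- a doubly stochastic image is majorized by its argument -- shows one $T$-transform can only move a vector down in $\preceq$ (alternatively, one checks the partial-sum inequalities of Definition \ref{def:equitable} by hand). A product of doubly stochastic matrices is doubly stochastic and $\preceq$ is transitive, so any vector reached from $\bm\eta$ by finitely many $T$-transforms satisfies $\cdot\preceq\bm\eta$.

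The substance is the ``only if'' direction, which I would prove by induction on the number of coordinates at which the (decreasingly sorted) $\bm\theta$ and $\bm\eta$ differ. If they agree everywhere we are done. Otherwise, because the totals agree while the vectors differ, I would let $k$ be the smallest index with $\eta_k<\theta_k$ and let $j$ be the largest index below $k$ with $\eta_j>\theta_j$; both exist since the first disagreement carries $\eta>\theta$, forcing a later strict $\eta<\theta$. The extremal choices force $\bm\theta=\bm\eta$ strictly between $j$ and $k$: for $j<i<k$ one cannot have $\eta_i<\theta_i$ by minimality of $k$, nor $\eta_i>\theta_i$ by maximality of $j$ below $k$. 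I would then set $\delta=\min\{\eta_j-\theta_j,\ \theta_k-\eta_k\}>0$ and define $\bm\eta'$ to agree with $\bm\eta$ except $\eta'_j=\eta_j-\delta$ and $\eta'_k=\eta_k+\delta$; the identity $\eta_j-\eta_k=(\eta_j-\theta_j)+(\theta_j-\theta_k)+(\theta_k-\eta_k)\ge 2\delta$ puts the associated $\lambda=1-\delta/(\eta_j-\eta_k)$ in $[0,1]$ and keeps $\bm\eta'$ decreasing, so this is a genuine $T$-transform.

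It then remains to verify $\bm\theta\preceq\bm\eta'\preceq\bm\eta$ and that $\bm\eta'$ agrees with $\bm\theta$ in strictly more coordinates. Here $\bm\eta'\preceq\bm\eta$ is the ``if'' direction just proved. For $\bm\theta\preceq\bm\eta'$ the partial sums are unchanged for $m<j$ and for $m\ge k$, so those inequalities are inherited from $\bm\theta\preceq\bm\eta$; the only real check is $j\le m<k$, where $\sum_{i\le m}\eta'_i=\sum_{i\le m}\eta_i-\delta$, and the coincidence on $(j,k)$ collapses the deficit to $\sum_{i\le m}(\eta_i-\theta_i)=\sum_{i\le j}(\eta_i-\theta_i)\ge\eta_j-\theta_j\ge\delta$, exactly what is needed. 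Finally $\delta$ equals one of the two gaps, so $\eta'_j=\theta_j$ or $\eta'_k=\theta_k$, creating a new agreement while disturbing no old one, and the induction terminates after finitely many steps. I expect the delicate point to be precisely this middle-range bookkeeping together with the extremal choice of $j,k$ that both keeps $\bm\eta'$ sorted and makes the deficit telescope; everything else is routine.
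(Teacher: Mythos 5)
Your proof is correct. Note that the paper does not prove Lemma \ref{lem:maj} at all---it is quoted from Marshall, Olkin and Arnold (2011)---and your argument is essentially the standard one found there: reduce to decreasingly sorted vectors via transpositions (the $\lambda=0$ $T$-transforms), verify the easy direction through double stochasticity of $\lambda I+(1-\lambda)\Pi_{ij}$, and prove the converse by induction on the number of disagreeing coordinates, using the extremal pair $j<k$ and the step $\delta=\min\{\eta_j-\theta_j,\,\theta_k-\eta_k\}$. All the delicate points check out: the bound $\eta_j-\eta_k\ge 2\delta$ does put $\lambda$ in $[0,1]$, the telescoping of the deficit on $j\le m<k$ uses exactly the majorization inequality at $m=j-1$, and the one detail you assert without proof (that $\bm\eta'$ stays sorted) follows immediately from $\eta_j\ge\eta_j'\ge\theta_j\ge\theta_k\ge\eta_k'\ge\eta_k$ and the monotonicity of $\bm\theta$ and $\bm\eta$, so there is no gap.
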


Clearly, for $\bm\th , \bm\eta\in\R_+^n$ with $\bm\th\preceq \bm\eta$, the components of $\bm\theta$ are less ``spread out'' than those of $\bm\eta$. For instance,
$$
   \(\frac{\|\bm\th\|}{n}, \dots, \frac{\|\bm\th\|}{n}\) \preceq \bm\th \preceq (\|\bm\th\|, 0, \dots, 0) \ \ \mbox{for all}\ \bm\th\in\R_+^n.
$$
Therefore, if $\bm\theta$ and $\bm\eta$ are the exposure vectors of two portfolios, the portfolio with exposure vector $\bm\th$ is considered to be more diversified than the one with exposure vector $\bm\eta$. 

\subsection{Main result}

For $\bm a=\(a_1,\dots, a_n\)\in\R^n$, denote its increasing rearrangement by $\bm {a}^\uparrow = \left(a_{(1)}, a_{(2)}, \dots, a_{(n)}\right)$. For random variables $X$ and $Y$, we write $X<_{\rm st}Y$ if $\p(X>x)<\p(Y>x)$ for all $x\in \R$ satisfying $\p(X>x)\in (0,1)$, and this will be referred to as strict stochastic dominance.\footnote{This condition is stronger than a different notion of strict stochastic dominance defined by $X\le _{\rm st} Y$ and $X\not \ge_{\rm st} Y$.} In the following lemma, we first compare portfolios of two independent extremely heavy-tailed Pareto random variables with possibly different tail parameters.

\begin{lemma}
 \label{le-3-1}
Let $\bm X =\(X_1, X_2\)$ be a vector of independent components with $X_i \sim {\rm Pareto}(\alpha_i)$ for  $i=1,2$ where $0<\alpha_1 \le  \alpha_2\le 1$. For $\bm\th, \bm\eta\in \R_+^2$ satisfying $\bm\th \preceq \bm\eta$, we have $\bm{\th}^\uparrow \cdot \bm X  \ge_{\rm st}  \bm{\eta}^\uparrow \cdot \bm X  $. Moreover, if $\bm\th \prec \bm\eta$, then $\bm{\th}^\uparrow \cdot \bm X >_{\rm st} \bm{\eta}^\uparrow \cdot \bm X $. 
\end{lemma}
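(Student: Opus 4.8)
The plan is to exploit that for $n=2$ the relation $\bm\th\preceq\bm\eta$ is equivalent, after passing to increasing rearrangements, to the chain $\eta_{(1)}\le\theta_{(1)}\le\theta_{(2)}\le\eta_{(2)}$ with $\theta_{(1)}+\theta_{(2)}=\eta_{(1)}+\eta_{(2)}=:s$; by Lemma \ref{lem:maj} it suffices to move $\bm\eta^\uparrow$ toward the center along a single $T$-transform. Concretely, I would fix $x$ and set $\psi(t)=\p\((\eta_{(1)}+t)X_1+(\eta_{(2)}-t)X_2>x\)$ for $t\in[0,\theta_{(1)}-\eta_{(1)}]$, and prove that $\psi$ is nondecreasing. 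Since $\psi(0)$ corresponds to $\bm{\eta}^\uparrow\cdot\bm X$ and the right endpoint $t=\theta_{(1)}-\eta_{(1)}$ to $\bm{\th}^\uparrow\cdot\bm X$, this yields $\bm{\th}^\uparrow\cdot\bm X\ge_{\rm st}\bm{\eta}^\uparrow\cdot\bm X$. For $x\le s$ both survival functions equal $1$ (each $X_i\ge1$), so only $x>s$ matters.

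To show $\psi'(t)\ge0$, I would condition on $X_2$ and write $\psi(t)=\E\[\bar F_1\((x-(\eta_{(2)}-t)X_2)/(\eta_{(1)}+t)\)\]$, where $\bar F_1(y)=y^{-\alpha_1}$ for $y\ge1$. Differentiating under the integral sign, with $a=\eta_{(1)}+t$ and $b=\eta_{(2)}-t$, reduces the claim to
\begin{equation*}
\int_1^{v}(sy-x)\,(x-by)^{-\alpha_1-1}y^{-\alpha_2-1}\,\d y\le0,\qquad v=\frac{x-a}{b},
\end{equation*}
equivalently, that the probability measure on $[1,v]$ with density proportional to $p(y)=(x-by)^{-\alpha_1-1}y^{-\alpha_2-1}$ has mean at most $x/s$. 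The integrand changes sign exactly at $y_0=x/s\in(1,v)$, so this is not a pointwise statement: the two sign regions must be balanced against each other.

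The main obstacle is precisely this integral inequality, and it is here that the infinite-mean restriction $\alpha_1\le\alpha_2\le1$ is used. I would attack it through the reflection substitution $z=(x-by)/a$, which sends $[1,v]$ onto $[1,(x-b)/a]$ and interchanges the roles of the two scale parameters and tail indices. In the balanced case $a=b$ the weight $p$ is symmetric about $y_0=x/s$ on a symmetric interval, so the mean equals $x/s$ and the inequality holds with equality; the task is then to show that the asymmetry created by $a<b$, together with $\alpha_1,\alpha_2\le1$, shifts the weighted mean strictly below $x/s$. I expect controlling how the mass of $p$ redistributes under this substitution to be the delicate technical core. Finally, strictness follows from the same computation: when $\bm\th\prec\bm\eta$ one has $a<b$ strictly in the interior of the path, the integral inequality is then strict for every $x>s$, so $\psi'(t)>0$ there, giving $\bm{\th}^\uparrow\cdot\bm X>_{\rm st}\bm{\eta}^\uparrow\cdot\bm X$.
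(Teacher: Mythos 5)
Your reduction is sound and runs parallel to the paper's: parametrize along the $T$-transform, differentiate the survival function in the transform parameter, and observe that the boundary terms cancel, leaving the sign of
\begin{equation*}
I:=\int_1^{v}(sy-x)\,(x-by)^{-\alpha_1-1}y^{-\alpha_2-1}\,\d y ,\qquad v=\frac{x-a}{b},
\end{equation*}
as the whole content of the lemma. But this is exactly where your proposal stops being a proof. The reflection substitution $z=(x-by)/a$ yields only the functional identity $I(a,b;\alpha_1,\alpha_2)=-a^{1-\alpha_1}b^{\alpha_2-1}\,I(b,a;\alpha_2,\alpha_1)$, which pins down the sign of $I$ only in the fully symmetric case $a=b$ \emph{and} $\alpha_1=\alpha_2$ (note that your claim that the weight is symmetric about $x/s$ when $a=b$ already requires $\alpha_1=\alpha_2$; for $\alpha_1<\alpha_2$ the balanced case is not an equality case of your mean inequality). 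You acknowledge that controlling the redistribution of mass under this substitution is ``the delicate technical core,'' and that core is left entirely unexecuted; nothing in the proposal uses $\alpha_1\le\alpha_2\le1$ in a way that would force $I\le0$, and the strictness claim inherits the same gap.

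The paper closes this gap by a different device that you may want to compare against: instead of trying to balance the two sign regions of the integrand for fixed $x$, it fixes the weights and treats (a change-of-variables image of) $I$ as a function $g$ of the threshold $x$. One has $g(s)=0$ because the domain of integration degenerates there, and $g'(x)$ is computed explicitly as a \emph{pointwise} expression whose sign is determined by elementary monotonicity comparisons using precisely $1/r_2\le 1/r_1$ (i.e., $\alpha_1\le\alpha_2$) and the exponents lying in $(-\infty,-1]$; hence $g\le0$ for all $x\ge s$, with strict inequality for $x>s$ giving the strict dominance. Some such second idea --- converting the non-pointwise balance condition into a pointwise one by differentiating in $x$ --- is what your argument is missing, and without it the proposal does not establish the lemma.
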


\begin{proof}
Without loss of generality, assume that $||\bm\theta||=||\bm\eta||=1$ and  $\bm\theta$ is a $T$-transform of $\bm\eta$. Write $\bm\theta^\uparrow= \(\theta,1-\theta\)$ and $\bm\eta^\uparrow=\(\eta,1-\eta\)$.
As $\bm\theta$ is a $T$-transform of $\bm\eta$, we have $0\le \eta\le \th\le 1/2$. Let $r_i = -1/\alpha_i \in (-\oo, -1]$ for $i = 1, 2$. Then, there exist $U_1, U_2 \overset{{\rm iid}}{\sim} \mathrm{U}(0, 1)$ such that $X_i \overset{\d}{=} U_i^{r_i}$ for $i=1, 2$. For the first statement on stochastic dominance, it suffices to show that, for $0<\eta <\th\le 1/2$,
\begin{equation}
 \label{eq-230305}
   \eta U_1^{r_1}+(1-\eta) U_2^{r_2}\le_{\rm st}\th U_1^{r_1}+(1-\th) U_2^{r_2}.
\end{equation}
We first derive the distribution function of $\eta U_1^{r_1}+ (1-\eta) U_2^{r_2}$. For any $x \ge 1$,
\begin{align*}
	\P\(\eta U_1^{r_1} + (1-\eta) U_2^{r_2} \le x\)
     &= \int_{\left(\frac{x-\eta}{1-\eta}\right)^{1/r_2}}^1 \int_{\left(\frac{x-(1-\eta)y^{r_2}}{\eta} \right)^{1/r_1}}^1 \d u_1 \d y\\
 	&= 1 - \left(\frac{x-\eta}{1-\eta}\right)^{1/r_2} - \int_{\left(\frac{x-\eta}{1-\eta}\right)^{1/r_2}}^1
           \left(\frac{x-(1-\eta)y^{r_2}}{\eta}\right)^{1/r_1} \d y.
\end{align*}
Similarly, we have
$$
	\P\(\th U_1^{r_1} + (1-\th) U_2^{r_2} \le x\) = 1 - \left(\frac{x-\th}{1-\th}\right)^{1/r_2} - \int_{\left(\frac{x-\th}{1-\th}\right)^{1/r_2}}^1 \left(\frac{x-(1-\th)y^{r_2}}{\th}\right)^{1/r_1} \d y.
$$
Let $x\ge 1$. For $z\in (0,1/2]$,
define $$
	H(z)= \left(\frac{x-z}{1-z}\right)^{1/r_2} +
        \int_{\left(\frac{x-z}{1-z}\right)^{1/r_2}}^1
            \left(\frac{x-(1-z)y^{r_2}}{z}\right)^{1/r_1} \d y.
$$
To prove \eqref{eq-230305}, we need to show   $H(\eta)\le H(\theta)$ for any $x\ge 1$. It suffices to show that $H(z)$ is increasing in $z\in(0,1/2]$. The derivative of $H(z)$ is
\begin{align*}
   H'(z) &= \frac{1}{r_1}\int_{\left(\frac{x-z}{1-z}\right)^{1/r_2}}^1 \left(\frac{x-(1-z)y^{r_2}}{z}\right)^{1/r_1-1} \frac{y^{r_2}-x}{z^2} \d y \\
   & = \frac{1}{r_1r_2z^{1/r_1+1}} \int_{\frac{x-z}{1-z}}^1 \left(x-(1-z)t\right)^{1/r_1-1} (t-x) t^{1/r_2-1} \d t \\
   & = \frac{1}{r_1r_2z^{1/r_1+1}(1-z)^{1/r_2+1}} \int_{x-z}^{1-z} \left(x-v\right)^{1/r_1-1} (v-(1-z)x) v^{1/r_2-1} \d v \\
	&= \frac{x^{1/r_1+1/r_2}}{r_1r_2z^{1/r_1+1}(1-z)^{1/r_2+1}} \int^{(x-z)/x}_{(1-z)/x} \left(1-y\right)^{1/r_1-1} (1-z-y) y^{1/r_2-1} \d y.
\end{align*}
Thus, it suffices to prove  
\begin{equation*}\label{emph}
g(x):=\int^{(x-z)/x}_{(1-z)/x} \left(1-y\right)^{1/r_1-1} y^{1/r_2-1}(y-1+z) \d y \le 0.
\end{equation*}
Since $g(1)=0$, we need to show that $g'(x)\le 0$ for $x>1$. Note that $1/r_2 \le 1/r_1$ by $\alpha_1 \le \alpha_2$. Thus,
\begin{align*}
    g'(x) &= \left(1-\frac{x-z}{x}\right)^{1/r_1-1} \left(\frac{x-z}{x}\right)^{1/r_2-1}
         \left(\frac{x-z}{x}-1+z\right)\frac{z}{x^2} \\
    & \quad\quad+ \left(1-\frac{1-z}{x}\right)^{1/r_1-1} \left(\frac{1-z}{x}\right)^{1/r_2-1} \left(\frac{1-z}{x}-1+z\right)\frac{1-z}{x^2} \\
	&=  \left[z^{1/r_1+1}(x-z)^{1/r_2-1}-(1-z)^{1/r_2+1}(x-1+z)^{1/r_1-1}\right]
           \frac{x-1}{x^{1/r_1 + 1/r_2 +1}} \\
	&\le \left[z^{1/r_1-1/r_2}(x-z)^{1/r_2-1}-(x-1+z)^{1/r_1-1}\right](1-z)^{1/r_2+1}
            \frac{x-1}{x^{1/r_1+1/r_2+1}} \\
	&\le \left[z^{1/r_1-1/r_2}-(x-1+z)^{1/r_1-1/r_2}\right](x-1+z)^{1/r_2-1}
          (1-z)^{1/r_2+1}\frac{x-1}{x^{1/r_1+1/r_2+1}}  \le 0.
\end{align*}
This means that $g(x)$ is decreasing, and hence $g(x) \le g(1)=0$ for $x \ge 1$. It is clear that $H(z$) is strictly increasing in $z\in(0,1/2)$ for $x>1$. Hence, the strict inequality in the second assertion of the lemma is obtained.
\end{proof}

Now, using Lemmas \ref{lem:maj} and \ref{le-3-1} and the fact that first-order stochastic dominance is closed under convolution (e.g., Theorem 1.A.3 (b) of \cite{SS07}), we obtain the following theorem for $n>2$ independent extremely heavy-tailed Pareto random variables.

\begin{theorem}\label{th:casen}
Let $\bm X =\(X_1, \dots, X_n\)$ be a vector of independent components with $X_i \sim {\rm Pareto}(\alpha_i)$ for all $i\in[n]$ where $0<\alpha_1 \le \dots \le \alpha_n\le 1$. For $\bm\th, \bm\eta\in \R_+^n$ satisfying $\bm\th \preceq \bm\eta$, we have $\bm{\th}^\uparrow \cdot \bm X  \ge_{\rm st}  \bm{\eta}^\uparrow \cdot \bm X  $. Moreover, if $\bm\th \prec \bm\eta$, then $\bm{\th}^\uparrow \cdot \bm X >_{\rm st} \bm{\eta}^\uparrow \cdot \bm X $. In particular, if $\alpha_1 = \dots = \alpha_n$, we have $\bm{\theta} \cdot \bm X \ge_{{\rm st}}\bm{\eta} \cdot\bm X$ when $\bm\th \preceq \bm\eta$ and $\bm{\th} \cdot \bm X >_{\rm st} \bm{\eta} \cdot \bm X$ when $\bm\th \prec \bm\eta$. 
\end{theorem}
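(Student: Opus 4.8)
The plan is to bootstrap from the two-variable comparison in Lemma \ref{le-3-1} to $n$ variables by decomposing the majorization relation into $T$-transforms, treating one transform at a time and absorbing the untouched coordinates as an independent convolution factor. First I would reduce to sorted weight vectors: since $\bm\theta^\uparrow$ and $\bm\eta^\uparrow$ depend on $\bm\theta,\bm\eta$ only through their order statistics and $\bm\theta\preceq\bm\eta$ is equivalent to $\bm\theta^\uparrow\preceq\bm\eta^\uparrow$, I may assume $\bm\theta=\bm\theta^\uparrow$ and $\bm\eta=\bm\eta^\uparrow$, so that $\bm\theta^\uparrow\cdot\bm X=\bm\theta\cdot\bm X$ and similarly for $\bm\eta$. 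The target then reads $\bm\theta\cdot\bm X\ge_{\rm st}\bm\eta\cdot\bm X$ for increasingly sorted weight vectors with $\bm\theta\preceq\bm\eta$.

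Next I would invoke the order-preserving form of Lemma \ref{lem:maj}: when $\bm\theta\preceq\bm\eta$ with both sorted increasingly, one can pass from $\bm\eta$ to $\bm\theta$ through finitely many $T$-transforms $\bm\eta=\bm\zeta^{(0)},\dots,\bm\zeta^{(m)}=\bm\theta$ in which every intermediate vector $\bm\zeta^{(\ell)}$ stays increasingly sorted and each step acts on a single pair of coordinates $i<j$ by moving $\zeta^{(\ell)}_i,\zeta^{(\ell)}_j$ toward their common mean. (The standard construction—transfer mass between the largest index where the vectors disagree in one direction and the smallest later index where they disagree in the other—keeps the sorted order intact, which is exactly what I need.) For one such step on coordinates $i<j$, write $W=\sum_{k\ne i,j}\zeta^{(\ell)}_k X_k$, which is independent of $(X_i,X_j)$ and unchanged by the step. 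Since $\ge_{\rm st}$ is closed under convolution with an independent summand, it suffices to compare the two-coordinate parts; because the step preserves sortedness we have $\zeta^{(\ell+1)}_i\le\zeta^{(\ell+1)}_j$ with $\alpha_i\le\alpha_j$, and $(\zeta^{(\ell+1)}_i,\zeta^{(\ell+1)}_j)\preceq(\zeta^{(\ell)}_i,\zeta^{(\ell)}_j)$, so Lemma \ref{le-3-1} applied to $(X_i,X_j)$ yields $\zeta^{(\ell+1)}_iX_i+\zeta^{(\ell+1)}_jX_j\ge_{\rm st}\zeta^{(\ell)}_iX_i+\zeta^{(\ell)}_jX_j$, hence $\bm\zeta^{(\ell+1)}\cdot\bm X\ge_{\rm st}\bm\zeta^{(\ell)}\cdot\bm X$. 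Chaining the $m$ steps by transitivity gives the non-strict conclusion.

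For strictness, if $\bm\theta\prec\bm\eta$ then at least one step is a nontrivial $T$-transform on a pair with distinct entries, where Lemma \ref{le-3-1} delivers the strict comparison $>_{\rm st}$. All the variables $\bm\zeta^{(\ell)}\cdot\bm X$ are continuously distributed with common essential infimum $\|\bm\theta\|=\|\bm\eta\|$ (majorization preserves the total), so the sets $\{x:\p(\cdot>x)\in(0,1)\}$ coincide across the chain; consequently one strict step survives convolution with the independent $W$ and combines with the remaining non-strict steps to give $\bm\theta\cdot\bm X>_{\rm st}\bm\eta\cdot\bm X$. Finally, the ``in particular'' case $\alpha_1=\dots=\alpha_n$ is immediate: the $X_i$ are then iid, so permuting weights does not change the distribution, giving $\bm\theta\cdot\bm X\stackrel{\rm d}{=}\bm\theta^\uparrow\cdot\bm X$ and $\bm\eta\cdot\bm X\stackrel{\rm d}{=}\bm\eta^\uparrow\cdot\bm X$, and the claim reduces to the sorted case already established.

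I expect the main obstacle to be the order-preservation in the $T$-transform decomposition. Lemma \ref{le-3-1} is asymmetric in its two arguments—it requires the smaller weight to sit on the heavier-tailed (smaller-$\alpha$) variable—so a naive chain of $T$-transforms that swaps the relative order of two coordinates would try to apply the lemma with the wrong pairing and break the argument. Guaranteeing that every intermediate weight vector stays sorted consistently with $\alpha_1\le\dots\le\alpha_n$ is the delicate point; a secondary technical issue is checking that the strong pointwise notion of strict stochastic dominance is preserved under convolution, which I would settle using the shared lower support endpoint noted above.
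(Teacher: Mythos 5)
Your proposal is correct and follows essentially the same route as the paper: reduce to sorted weight vectors, decompose the majorization into an order-preserving chain of $T$-transforms (the paper cites Lemma 1 of \cite{Ma98} for exactly the sorted-chain construction you describe), apply Lemma \ref{le-3-1} to the affected pair with the smaller weight on the heavier-tailed component, and close under convolution with the independent remainder. The only cosmetic difference is in the strictness step, where the paper writes out the tail probability as an integral against the density of the untouched sum while you argue via the shared lower support endpoint; both yield the same conclusion.
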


\begin{proof}
By Lemma 1 in \cite{Ma98}, there exists a finite number of vectors $\bm{c}_j^\uparrow$, which is an increasing rearrangement of $\bm c_j \in \R^n_+$, $j\in[N]$, such that
$$
     \bm{\eta}^\uparrow = \bm{c}_1^\uparrow \succeq \bm{c}_2^\uparrow \succeq \dots \succeq \bm{c}_N^\uparrow = \bm{\theta}^\uparrow,
$$
where $\bm{c}_{j}^\uparrow$ is a $T$-transform of $\bm{c}_{j-1}^\uparrow$ for each $j\in[N]\backslash \{1\}$. Hence, it suffices to show that $\bm\th^\uparrow \cdot \bm X \ge_{\rm st} \bm{\eta}^\uparrow \cdot\bm X$ holds for $\bm\th^\uparrow$ being a $T$-transform of $\boldsymbol{\eta}^\uparrow$. Suppose that $\bm\th =(\th_1, \dots, \th_n)$ and $\bm\eta=(\eta_1,\dots,\eta_n)$  such that $\bm\th \preceq \bm\eta$. Take $i<j $ as in \eqref{eq:t-tran} such that $(\theta_{(i)},\theta_{(j)})$ is a $T$-transform of $(\eta_{(i)},\eta_{(j)})$. By  Lemma \ref{le-3-1}, we have $\theta_{(i)} X_i +\theta_{(j)} X_j \ge_{\rm st} \eta_{(i)} X_i+\eta_{(j)}X_j$. As first-order stochastic dominance is closed under convolutions and $\theta_{(k)}=\eta_{(k)}$ for $k\in [n]\setminus\{i,j\}$, we obtain 
$$\sum_{k=1}^n \theta_{(k)}X_k \ge_{\rm st}\sum_{k=1}^n \eta_{(k)}X_k.$$ 
Moreover, let $c=\theta_{(i)}+\theta_{(j)}$ and
$$
    S_{i,j}:=\sum_{k\in[n]\setminus\{i,j\}} \theta_{(k)}X_k=\sum_{k\in[n]\setminus\{i,j\}} \eta_{(k)}X_k,
$$
for which the density function is denoted by $f$.  For $t>\|\bm\th\|$, we have
\begin{align*}
  &\p\(\bm\th^\uparrow \cdot \bm X>t\)
	\\ & =\p\(S_{i,j}+\theta_{(i)}X_i+\theta_{(j)}X_j>t\)\\
	& =\int_{\|\bm\th\|-c}^\infty\p\(\theta_{(i)}X_i+\theta_{(j)}X_j>t-s\)f(s)\,\d s\\
    & =\int_{\|\bm\th\|-c}^{t-c}\p\(\theta_{(i)}X_i+\theta_{(j)}X_j>t-s\)f(s)\, \d s  +\int_{t-c}^\infty\p\(\theta_{(i)}X_i+\theta_{(j)}X_j>t-s\)f(s)\,\d s\\
    & >\int_{\|\bm\eta\|-c}^{t-c}\p\(\eta_{(i)}X_i+\eta_{(j)}X_j>t-s\)f(s)\,\d s    +\int_{t-c}^\infty\p\(\eta_{(i)}X_i+\eta_{(j)}X_j>t-s\)f(s)\,\d s\\
	&=\p(\bm\eta^\uparrow \cdot \bm X>t),
\end{align*}
where the inequality follows from the strictness statement in Lemma \ref{le-3-1}, i.e., $\bm{\th}^\uparrow \cdot \bm X >_{\rm st} \bm{\eta}^\uparrow \cdot \bm X$.
\end{proof}

Note that Pareto distributions with smaller tail indices are larger in first-order stochastic dominance. In the case that the Pareto random variables have non-identical tail indices, Theorem \ref{th:casen} means that if we shift exposures from extremely heavy-tailed Pareto random variables with larger tail indices to those with smaller tail indices, the portfolio gets larger in first-order stochastic dominance.

   Theorem \ref{th:casen} is a natural generalization of the main results of \cite{CEW24}, and there are several  distinctions. First,   Theorem  \ref{th:casen} allows for the comparison of two diversified portfolios.
 Second, 
 Theorem \ref{th:casen} considers independent but not necessarily identically distributed Pareto random variables. 
  Third, completely different proof techniques are used for  Theorem \ref{th:casen} 
  and the other paper. 

For the rest of the paper, we will mainly focus on the case of identically distributed Pareto random variables, to get concise results and to separate the effect of diversification from the effect of marginal distributions.

Theorem \ref{th:casen} shows that a portfolio of iid extremely heavy-tailed Pareto random variables with a smaller exposure vector in majorization order is larger in first-order stochastic dominance. This implies a  \emph{diversification benefit}: A more diversified portfolio of iid extremely heavy-tailed Pareto profits is strictly preferred; the strictness of the diversification benefit is crucial for the optimal investment decisions of agents (see Section \ref{sec:4}). Several generalizations of Theorem \ref{th:casen} are studied in Section \ref{sec:3}.

\subsection{Further discussions}

The assumption of an extremely heavy tail is necessary for Theorem \ref{th:casen} in the iid case, which is made clear by the following proposition, with its proof provided in the appendix.
\begin{proposition}
   \label{prop:infinite}
Suppose that $\bm\th, \bm\eta\in \R_+^n$ such that $\bm\th \prec \bm\eta$, and $\bm X$ is a vector of $n$ iid random variables with finite mean. The inequality $\bm\th \cdot \bm X \ge_{\rm st} \bm\eta\cdot \bm X$ cannot hold unless $\mathbf X $ is degenerate. \end{proposition}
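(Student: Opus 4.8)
The plan is to upgrade the hypothesized stochastic dominance into an \emph{equality} in distribution and then contradict that equality with a carefully chosen strictly convex test function. Write $\mu=\E[X_1]$, which is finite by assumption. Since $\bm\th\prec\bm\eta$ forces $\sum_i\th_i=\sum_i\eta_i$ and the $X_i$ are identically distributed, the two portfolios have the same mean: $\E[\bm\th\cdot\bm X]=\mu\sum_i\th_i=\mu\sum_i\eta_i=\E[\bm\eta\cdot\bm X]$. The first step I would record is the elementary fact that first-order stochastic dominance together with equal finite means forces equality in distribution: if $Y\ge_{\rm st}Z$ with $\E[Y]=\E[Z]$ finite, then $h(x):=\p(Y>x)-\p(Z>x)\ge0$ satisfies $\E[Y]-\E[Z]=\int_{\R}h(x)\,\d x=0$, so $h\equiv0$ by right-continuity, i.e. $Y\overset{\d}{=}Z$. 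Applying this with $Y=\bm\th\cdot\bm X$ and $Z=\bm\eta\cdot\bm X$ shows that the assumed $\bm\th\cdot\bm X\ge_{\rm st}\bm\eta\cdot\bm X$ in fact implies $\bm\th\cdot\bm X\overset{\d}{=}\bm\eta\cdot\bm X$.

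The second step contradicts this equality for non-degenerate $\bm X$. Fix a strictly convex function of linear growth, e.g. $f(x)=\sqrt{1+x^2}$, so that $0\le f(x)\le 1+|x|$ and hence $f(\bm c\cdot\bm X)$ is integrable for every $\bm c\in\R_+^n$ (as $\E|X_1|<\infty$). The aim is to prove $\E[f(\bm\th\cdot\bm X)]<\E[f(\bm\eta\cdot\bm X)]$, which is incompatible with $\bm\th\cdot\bm X\overset{\d}{=}\bm\eta\cdot\bm X$. By Lemma \ref{lem:maj}, $\bm\th$ is obtained from $\bm\eta$ by finitely many $T$-transforms, and since $\bm\th\neq\bm\eta$ at least one of them is non-trivial, i.e. acts with $\lambda\in(0,1)$ on two \emph{distinct} coordinates $\eta_i\neq\eta_j$. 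At that step, condition on the remaining coordinates, whose contribution $S$ is independent of $(X_i,X_j)$, and set $A=\eta_iX_i+\eta_jX_j$ and $B=\eta_jX_i+\eta_iX_j$. By exchangeability $A\overset{\d}{=}B$ and $(S,A)\overset{\d}{=}(S,B)$, while the $T$-transform turns $S+A$ into $S+\lambda A+(1-\lambda)B$. Strict convexity yields $f(S+\lambda A+(1-\lambda)B)<\lambda f(S+A)+(1-\lambda)f(S+B)$ on the event $\{A\neq B\}=\{(\eta_i-\eta_j)(X_i-X_j)\neq0\}=\{X_i\neq X_j\}$, which has positive probability precisely because $\bm X$ is non-degenerate. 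Taking expectations and using $(S,A)\overset{\d}{=}(S,B)$ produces a strict decrease at that step; the remaining $T$-transforms give only non-strict decreases by the same (non-strict) convexity estimate. Chaining the steps along the decomposition then gives the desired strict inequality $\E[f(\bm\th\cdot\bm X)]<\E[f(\bm\eta\cdot\bm X)]$, contradicting the equality in distribution. Hence $\bm X$ must be degenerate.

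The main obstacle is that only a finite \emph{first} moment is assumed, not a finite variance. The textbook shortcut of matching variances — observing $\var(\bm\th\cdot\bm X)=\sigma^2\sum_i\th_i^2<\sigma^2\sum_i\eta_i^2=\var(\bm\eta\cdot\bm X)$ by strict Schur-convexity of $\bm c\mapsto\sum_i c_i^2$ — fails when $\sigma^2=\infty$, which is exactly the regime of interest. Using a strictly convex $f$ of only linear growth circumvents this: it keeps every relevant expectation finite under the finite-mean assumption while still forcing strict inequality via Jensen on each non-trivial mixing step. The remaining work is routine bookkeeping, namely confirming that a genuinely non-trivial $T$-transform appears in the decomposition (guaranteed by $\bm\th\neq\bm\eta$) and that each intermediate vector stays in $\R_+^n$ so that all the $f$-expectations used in the chaining are finite.
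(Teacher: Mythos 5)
Your proposal is correct and follows essentially the same route as the paper's proof: both first upgrade $\ge_{\rm st}$ plus equal finite means to equality in distribution, then use the strictly convex, linearly growing test function $f(x)=\sqrt{1+x^2}$ together with a $T$-transform decomposition and an exchangeability/symmetrization step on the two mixed coordinates to force degeneracy. The only difference is cosmetic: the paper first dispatches the finite-variance case by comparing $\sigma^2\sum_i\th_i^2$ with $\sigma^2\sum_i\eta_i^2$ before running the convex-function argument for the infinite-variance case, whereas you handle both cases uniformly with $f$.
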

Proposition \ref{prop:infinite} suggests that first-order stochastic dominance is not able to compare portfolios of iid finite-mean random variables. A more suitable notion of stochastic dominance in this case is second-order stochastic dominance. For two random variables $X$ and $Y$, $X$ is smaller than $Y$ in \emph{second-order stochastic dominance}, denoted by $X\le_{\rm ssd}Y$, if $\E(u(X))\le \E(u(Y))$ for all increasing and concave functions $u$, provided that the expectations exist. The relation $X\le_{\rm ssd}Y$ means that $Y$ is ``less variable" than $X$. The following result directly follows from Theorem 3.A.35 of \cite{SS07}.
\begin{proposition}\label{prop:finite}
Let $\bm X$ be a vector of $n$ iid Pareto random variables with tail parameter $\alpha>1$. For $\bm\th, \bm\eta\in \R_+^n$ satisfying $\bm\th \preceq \bm\eta$, we have $\bm \eta\cdot \bm X\le_{\rm ssd}\bm \theta\cdot \bm X$.
\end{proposition}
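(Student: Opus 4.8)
The plan is to deduce the statement from the convex order and then translate that into second-order stochastic dominance. The key preliminary observation is that, because $\alpha>1$, each $X_i$ is integrable, so both weighted sums $\bm\eta\cdot\bm X$ and $\bm\theta\cdot\bm X$ have finite expectation; moreover $\bm\theta\preceq\bm\eta$ forces $\|\bm\theta\|=\|\bm\eta\|$, so the two sums share the common mean $\|\bm\theta\|\,\E[X_1]$. This equal-mean property is exactly what makes the convex order the right intermediate tool, and it is also what guarantees that the $\le_{\rm ssd}$ comparison is between two variables with matching first moments.

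First I would invoke Theorem 3.A.35 of \cite{SS07}: for iid (more generally, exchangeable) integrable random variables $X_1,\dots,X_n$ and weight vectors with $\bm\theta\preceq\bm\eta$, the map $\bm a\mapsto\bm a\cdot\bm X$ is monotone in the convex order, i.e. $\bm\theta\cdot\bm X\le_{\rm cx}\bm\eta\cdot\bm X$. Intuitively, passing from the more diversified $\bm\theta$ to the less diversified $\bm\eta$ spreads the weights out and makes the weighted sum more variable, which is precisely the content of the convex order: $\E[\phi(\bm\theta\cdot\bm X)]\le\E[\phi(\bm\eta\cdot\bm X)]$ for every convex $\phi$.

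Finally I would translate this into the desired SSD relation. For an arbitrary increasing concave $u$, the function $-u$ is convex, so applying the convex-order inequality to $\phi=-u$ gives $\E[-u(\bm\theta\cdot\bm X)]\le\E[-u(\bm\eta\cdot\bm X)]$, that is, $\E[u(\bm\eta\cdot\bm X)]\le\E[u(\bm\theta\cdot\bm X)]$. Since this holds for every increasing concave $u$, it is exactly $\bm\eta\cdot\bm X\le_{\rm ssd}\bm\theta\cdot\bm X$. The relevant expectations exist because $X_i\ge 1$ and every concave $u$ is dominated above by an affine function, so integrability of the $X_i$ makes each $\E[u(\,\cdot\,)]$ finite, as required by the definition of $\le_{\rm ssd}$.

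The proof is short because the substantive work is carried by the cited majorization–convex-order theorem. The only points requiring care are (i) checking that the paper's majorization convention lines up with the direction of the convex order, so that the more spread-out $\bm\eta$ is the convex-larger sum, and (ii) correctly reversing the inequality when moving from convex functions to increasing concave functions, which is what interchanges the roles of $\bm\theta$ and $\bm\eta$ in the final $\le_{\rm ssd}$ statement. I expect the main (and only mild) obstacle to be bookkeeping these directions consistently, rather than any genuine analytic difficulty.
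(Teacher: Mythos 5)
Your proof is correct and follows essentially the same route as the paper, which simply cites Theorem 3.A.35 of Shaked and Shanthikumar (2007) to get the convex-order comparison $\bm\theta\cdot\bm X\le_{\rm cx}\bm\eta\cdot\bm X$ and reads off second-order stochastic dominance from it. Your additional bookkeeping (equal means, integrability for $\alpha>1$, and passing from convex $\phi$ to increasing concave $u$ via $\phi=-u$) is accurate and just makes explicit what the paper leaves implicit.
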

Proposition \ref{prop:finite} means that diversification of iid Pareto random variables with finite mean makes the portfolio less ``spread out", and thus less risky in the sense of \cite{RS70}. This observation is distinct from Theorem \ref{th:casen}, which suggests that diversification makes a portfolio ``larger" if the Pareto random variables are extremely heavy-tailed. Implications of Theorem \ref{th:casen} and Proposition \ref{prop:finite} for investors are discussed in Section \ref{sec:4}.

\begin{remark}[Generalized Pareto distributions]\label{remark:GPD}
 As a pillar of the Extreme Value Theory, the Pickands-Balkema-de Haan Theorem \citep{BD74, P75} states that the generalized Pareto distributions are the only possible non-degenerate limiting distributions of the excess of random variables beyond a high threshold.
The generalized Pareto distribution for $\xi\ge0$ is defined as
\begin{equation*}\label{eq:GPD}
    G_{\xi,\beta}(x)=1-\(1+\xi\frac{x}{\beta}\)^{-1/\xi},~~~ x\ge 0,
\end{equation*}
where $\xi\ge 0$ ($\xi=0$ corresponds to an exponential distribution) and $\beta>0$; see \cite{EKM97}. If $\xi\ge 1$, then $G_{\xi,\beta}$ does not have finite mean. As a generalized Pareto distribution for $\xi>0$ can be converted to  $P_{1/\xi,1}$  by a location-scale transform,  Theorem \ref{th:casen} can be equivalently stated for the generalized Pareto distribution with $\xi\ge1$.
\end{remark}

\begin{remark}[Extremely heavy-tailed Pareto sum]
We say an \emph{extremely heavy-tailed Pareto sum} is a random variable $\sum_{j\in\N} \lm_j Y_j$
where $Y_j\sim \mathrm{Pareto}(\alpha_j)$, $j\in \N$, are independent, $\alpha_j\in (0,1]$, $\lm_j \in \R_+$,
and $\sum_{j\in\N} \lm_j<\infty$. Let $\bm X $ be a vector of iid extremely heavy-tailed Pareto sum random variables,
and $\bm\th, \bm\eta\in \R_+^n$ satisfy $\bm\th \preceq \bm\eta$. Then $\bm\th \cdot \bm X \ge_{\rm st} \bm\eta\cdot \bm X$. This can be shown by applying Theorem \ref{th:casen} to iid copies of each $Y_j$ and using the fact that convolution preserves first-order stochastic dominance.
\end{remark}

\begin{remark}
Let $X_1 \sim {\rm Pareto(\alpha_1)}$ and $X_2 \sim {\rm Pareto(\alpha_2)}$ be independent with $\alpha_1, \alpha_2 \in (0,1]$ such that $\alpha_1 \le \alpha_2$. For $(\eta_1, \eta_2),(\theta_1, \theta_2)\in\R^2$ such that  $(\eta_1, \eta_2) \succeq (\th_1, \th_2)$, it is natural to consider whether the following statement is true
\begin{equation} \label{eq-6}
	\eta_{(2)} X_1 + \eta_{(1)} X_2 \le_{{\rm st}} \th_{(2)} X_1 +\th_{(1)} X_2.
\end{equation}
Fix $\alpha_1 = 0.15$ and $\alpha_2=0.75$. 
In Figure \ref{fig:1},  we plot the empirical distribution functions of $\eta_{(2)} X_1 + \eta_{(1)} X_2$ and $\theta_{(2)} X_1 + \theta_{(1)} X_2$ for two cases: $(\eta_1, \eta_2, \theta_1, \theta_2) = (6, 2, 5, 3)$ and $(\eta_1, \eta_2, \theta_1, \theta_2) = (9, 1, 6, 4)$. Based on our numerical result, \eqref{eq-6} does not seem to hold, although we do not have a proof.

\begin{figure}[bht]
		\centering
		\includegraphics[width=10cm]{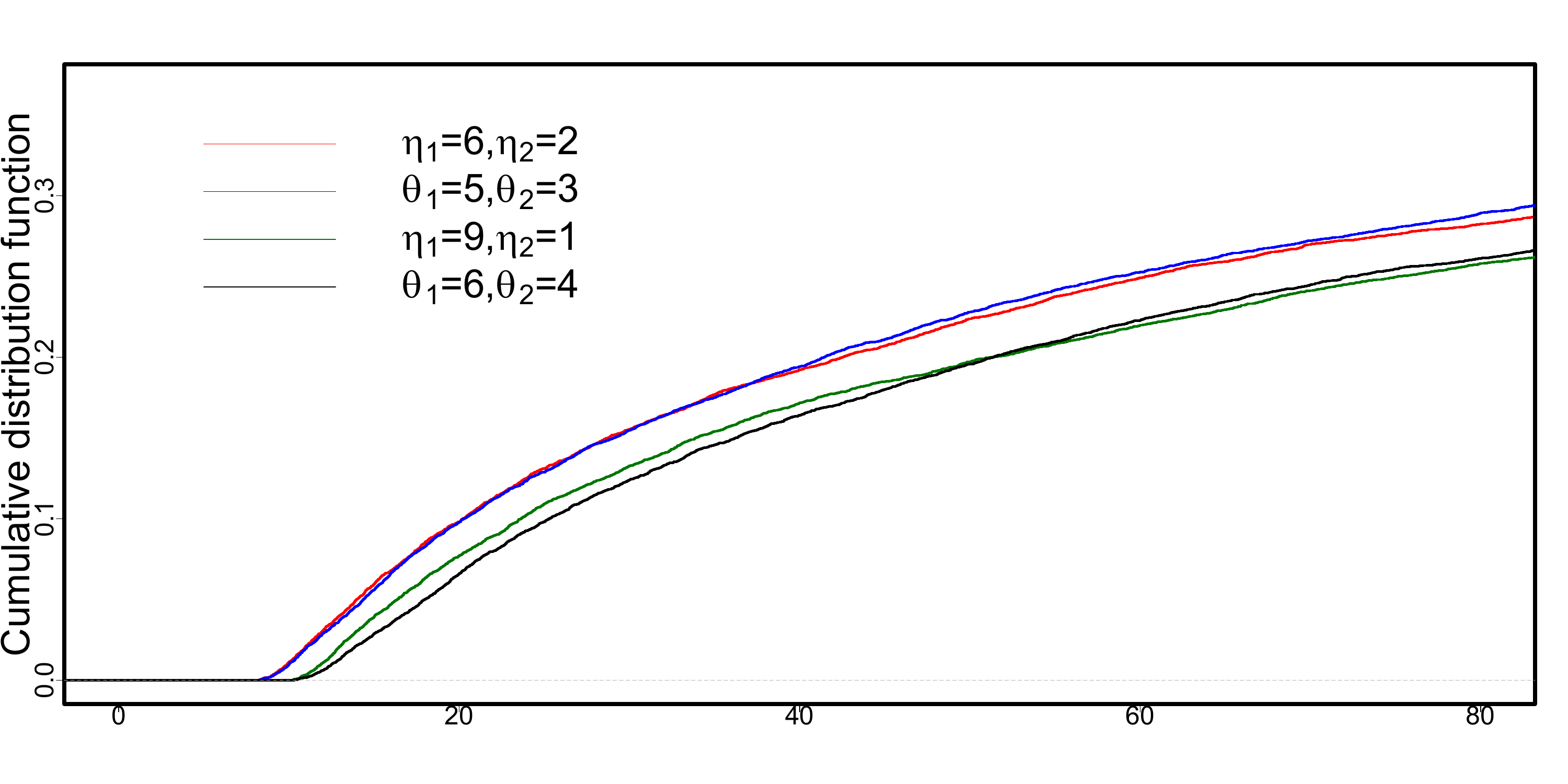}
    \caption{Empirical cumulative distribution functions of $\eta_{(2)} X_1+\eta_{(1)} X_2$ and $\th_{(2)} X_1+ \th_{(1)} X_2$ using $\alpha_1=0.15$ and $\alpha_2=0.75$}
    \label{fig:1}
\end{figure}
\end{remark}

Next, we present a simple corollary on comparing equally weighted portfolios with different sizes. Write $\bm \gamma_n=\bm 1_n/n$ for $n\in\N$. By noting that
$$
    \(\frac{1}{n+1},\dots,\frac{1}{n+1},\frac{1}{n+1}\)\preceq  \boldsymbol\(\frac{1}{n},\dots,\frac{1}{n},0\)\preceq \dots\preceq (1,\dots,0,0) \mbox{~for~}n\ge 1,
$$
the following result directly follows from Theorem \ref{th:casen}.

\begin{corollary}
 \label{cor:equal}
For $n\in\N$, let  $\bm X_n $  be a vector of $n$  iid $\mathrm{Pareto}(\alpha) $ random variables with $\alpha\in(0,1]$. For $k,\ell\in \mathbb N$ with $k< \ell $, we have $\bm \gamma_k \cdot \bm X_k <_{\rm st} \bm \gamma_\ell \cdot \bm X_\ell$.
\end{corollary}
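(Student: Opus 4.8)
The plan is to reduce this cross-dimensional comparison to a single application of Theorem \ref{th:casen} inside the common dimension $\ell$. Fix $k<\ell$ and work with $\bm X_\ell=(X_1,\dots,X_\ell)$, a vector of $\ell$ iid $\mathrm{Pareto}(\alpha)$ components. I would introduce two exposure vectors in $\R_+^\ell$: the uniform vector $\bm\theta=\bm\gamma_\ell=(1/\ell,\dots,1/\ell)$, and the padded vector $\bm\eta=(1/k,\dots,1/k,0,\dots,0)$ consisting of $k$ entries equal to $1/k$ followed by $\ell-k$ zeros. Both sum to $1$, so they are comparable in majorization order, and the trailing zeros ensure that $\bm\eta\cdot\bm X_\ell=\frac1k\sum_{i=1}^k X_i$, which has the same law as $\bm\gamma_k\cdot\bm X_k$ because the first $k$ components of $\bm X_\ell$ are themselves $k$ iid $\mathrm{Pareto}(\alpha)$ variables.

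The key structural input is that $\bm\theta\prec\bm\eta$. This is precisely the strict majorization furnished by the displayed chain preceding the statement, read in dimension $\ell$ (take $n+1=\ell$): the uniform vector sits at the bottom, and $(1/k,\dots,1/k,0,\dots,0)$ appears strictly above it whenever $k<\ell$. Equivalently, the uniform vector is the minimum element in majorization order among all nonnegative vectors with a fixed coordinate sum. Since $k<\ell$ forces $\bm\theta\neq\bm\eta$, the majorization is strict.

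With $\bm\theta\prec\bm\eta$ in hand, the iid clause of Theorem \ref{th:casen} (all tail indices equal to $\alpha$) yields $\bm\theta\cdot\bm X_\ell>_{\rm st}\bm\eta\cdot\bm X_\ell$. Identifying $\bm\theta\cdot\bm X_\ell=\bm\gamma_\ell\cdot\bm X_\ell$ and, in distribution, $\bm\eta\cdot\bm X_\ell\overset{\d}{=}\bm\gamma_k\cdot\bm X_k$, and noting that $<_{\rm st}$ depends only on the marginal laws, gives $\bm\gamma_k\cdot\bm X_k<_{\rm st}\bm\gamma_\ell\cdot\bm X_\ell$, as claimed. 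The only points requiring care, rather than genuine difficulty, are the dimension-matching via zero padding and the observation that replacing $\bm\eta\cdot\bm X_\ell$ by the equal-in-law quantity $\bm\gamma_k\cdot\bm X_k$ is legitimate for a law-based order such as $<_{\rm st}$; all the substantive work has already been absorbed into Theorem \ref{th:casen}.
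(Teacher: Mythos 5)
Your proposal is correct and follows essentially the same route as the paper: the authors likewise pad $(1/k,\dots,1/k)$ with zeros to dimension $\ell$, invoke the majorization chain $\bm\gamma_{\ell}\preceq(1/k,\dots,1/k,0,\dots,0)\preceq\dots\preceq(1,0,\dots,0)$, and apply Theorem \ref{th:casen}. Your additional remarks on the strictness of the majorization and the law-invariance of $<_{\rm st}$ are accurate fillings-in of details the paper leaves implicit.
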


Corollary \ref{cor:equal} means that the more components in a portfolio of iid extremely heavy-tailed Pareto profits, the more benefit we have from diversification.  For $\alpha\in(0,1]$, $p\in(0,1)$, and iid $\mathrm {Pareto}(\alpha)$ random variables $X_{1},\dots,X_{n}$,
denote by $F$ the distribution function of $\bm \gamma_n \cdot \bm X_n$.
We plot $F^{-1}$ for $n=2,\dots,6$ in Figure \ref{f1}. 
The difference between the curves for different $n$ becomes more pronounced  as $\alpha$ becomes smaller, i.e., the tail of the Pareto random variables becomes heavier.
 \begin{figure}[h]
\centering
\includegraphics[height=5cm, trim={0 0 0 20},clip]{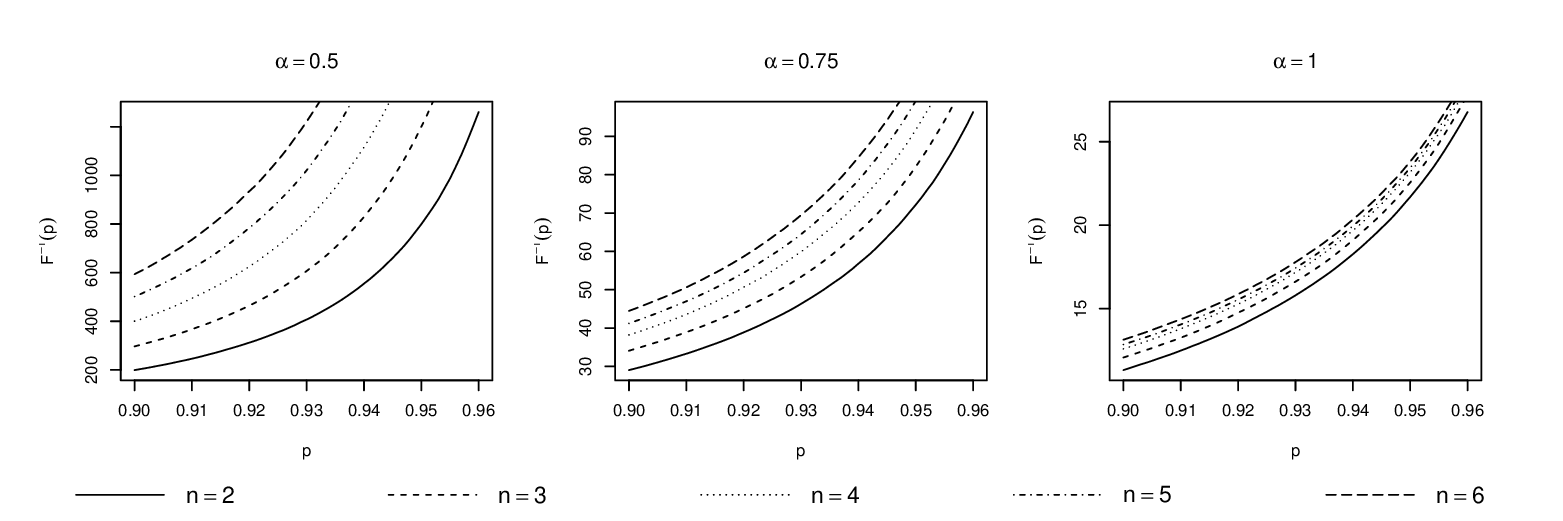}
\caption{$F^{-1}(p)$ for $n=2,\dots,6$ and $p\in(0.9,0.96)$.}
\label{f1}
\end{figure}

 \subsection{The St.~Petersburg   lottery}
The St.~Petersburg paradox is perhaps the most well-known   example of an infinite-mean profit model in economic theory.
The St.~Petersburg lottery has a distribution that pays $2^k$ with probability $2^{-k}$ for each $k\in\N$.
This distribution can be seen as the discrete version of Pareto(1), because
$
\p(X>2^k)= 2^{-k}
$ for $k\in \N$ if $X$ is the St.~Petersburg lottery, 
and 
$
\p(X>2^k)= 2^{-k}
$ for all $k\in(0,\infty)$
if $X$ follows a Pareto(1) distribution.
One may naturally wonder whether our main result on diversification holds for this distribution.
It turns out that the conclusion in Theorem \ref{th:casen} does not hold in general for the St.~Petersburg lottery, but we have a weaker version of diversification benefit. 

\begin{proposition}\label{prop:SP}
For $n\in\N$, let  $\bm X_n $  be a vector of $n$  iid St.~Petersburg lotteries. For $k,\ell\in \N$, if $\ell/k=2^a$ for some $a\in \N$, then $$\bm \gamma_k \cdot \bm X_k \le_{\rm st} \bm \gamma_\ell \cdot \bm X_\ell.$$  
\end{proposition}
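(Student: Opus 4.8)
The plan is to reduce the claim to a single ``doubling'' step plus an elementary inequality for the sum of two lotteries. Since $\le_{\rm st}$ is transitive and $\ell/k=2^a$, it suffices to prove that doubling the number of components always increases the equally weighted portfolio in first-order stochastic dominance, i.e.
$$\bm\gamma_n\cdot\bm X_n\le_{\rm st}\bm\gamma_{2n}\cdot\bm X_{2n}\qquad\text{for every }n\in\N,$$
and then to chain this inequality $a$ times along $k,2k,\dots,2^ak=\ell$.

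For the doubling step I would group the $2n$ lotteries into $n$ consecutive pairs and write
$$\bm\gamma_{2n}\cdot\bm X_{2n}=\frac1{2n}\sum_{j=1}^{2n}Y_j=\frac1n\sum_{i=1}^n\frac{Y_{2i-1}+Y_{2i}}2,$$
where the blocks $Z_i:=(Y_{2i-1}+Y_{2i})/2$ are iid. Thus the doubling step follows once I establish the single-block inequality $X\le_{\rm st}(Y_1+Y_2)/2$ for $X,Y_1,Y_2$ iid St.~Petersburg lotteries: indeed $Z_i\ge_{\rm st}X_i$ marginally, and since first-order stochastic dominance is preserved under independent convolution and under the increasing map $x\mapsto x/n$ (Theorem 1.A.3(b) of \cite{SS07}), summing gives $\frac1n\sum_i Z_i\ge_{\rm st}\frac1n\sum_i X_i=\bm\gamma_n\cdot\bm X_n$.

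Everything therefore rests on the single-block inequality, which I would prove by directly comparing survival functions. Because $X$ is supported on $\{2^m:m\ge1\}$, one has $\p(X>t)=2^{-m}$ for $t\in[2^m,2^{m+1})$, so it suffices to check, for each $m\ge1$, that $\p((Y_1+Y_2)/2>t)\ge 2^{-m}$ on this interval; as this survival function is decreasing in $t$, the binding constraint is $\p(Y_1+Y_2\ge2^{m+2})\ge 2^{-m}$. The key computation is the exact identity $\p(Y_1+Y_2\ge2^{n})=2^{\,2-n}$ for all $n\ge2$, which makes the binding constraint hold with equality. I would obtain it by summing $2^{-a-b}$ over $\{2^a+2^b<2^n\}$: the diagonal $a=b$ contributes exponents only up to $n-2$, while the off-diagonal pairs range over $1\le a<b\le n-1$, and using $\big(\sum_{a=1}^{n-1}2^{-a}\big)^2=\sum_{a=1}^{n-1}2^{-2a}+2\sum_{1\le a<b\le n-1}2^{-a-b}$ the mismatch in the diagonal ranges leaves $\p(Y_1+Y_2<2^n)=\big(1-2^{-(n-1)}\big)^2-2^{-2(n-1)}=1-2^{-(n-2)}$, hence the identity. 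Evaluating at $n=m+2$ gives exactly $2^{-m}$.

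The main obstacle is precisely this combinatorial sum; the rest is bookkeeping and the transitivity/convolution argument. Two points worth flagging in the write-up: first, since equality holds at every dyadic threshold, the inequality is genuinely non-strict, consistent with the proposition asserting $\le_{\rm st}$ rather than the strict $<_{\rm st}$ of Theorem \ref{th:casen}; second, the whole argument hinges on the pairing aligning with the dyadic support of the lotteries, which is exactly why only ratios $\ell/k$ that are powers of two are reachable by this method and why the full strength of Theorem \ref{th:casen} need not persist for the discrete St.~Petersburg distribution.
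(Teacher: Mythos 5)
Your proposal is correct and follows essentially the same route as the paper: establish the single-block inequality $X_1\le_{\rm st}(Y_1+Y_2)/2$, then use closure of $\le_{\rm st}$ under independent convolution to get the doubling step, and chain $a$ times. The only cosmetic difference is that you verify the key identity $\p(Y_1+Y_2\ge 2^{m+2})=2^{-m}$ by a combinatorial sum over the support, whereas the paper gets the same quantity in one line via $\p(X_1\le 2^m,X_2\le 2^m)-\p(X_1=2^m,X_2=2^m)=(1-2^{-m})^2-2^{-2m}$; both reduce to the identical expression.
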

\begin{proof}
We first show 
$X_1 \le_{\rm st}(X_1+X_2)/2$ 
for two iid St.~Petersburg lotteries $X_1$ and $X_2$. Note that for $m\in \N$,
\begin{align*}
\p\(\frac{X_1+X_2}{2}<2^m\)&=\p\(X_1 \le 2^m, X_2\le  2^m\) - \p\(X_1 = 2^m, X_2=  2^m\)\\
&=(1-2^{-m})^2 - (2^{-m})^2=1-2^{1-m} = \P\(X_1<2^m\). 
\end{align*}
Moreover, for any $x\in(2^{m-1},2^{m}]$, 
$$\p\(0.5(X_1+X_2)< x\)\le \p\(0.5(X_1+X_2)< 2^m\)=\P(X_1< x).$$ 
Therefore, we conclude   $X_1\le_{\rm st}(X_1+X_2)/2$. 
As the stochastic order $\ge_{\rm st}$ is closed under convolution,  $\bm \gamma_k \cdot \bm X_k \le_{\rm st} \bm \gamma_{2k} \cdot \bm X_{2k}.$
 The result holds by induction. 
\end{proof}
The stochastic dominance relation 
$\bm \gamma_k \cdot \bm X_k \le_{\rm st} \bm \gamma_\ell \cdot \bm X_\ell$ with $k\le \ell$
does not necessarily hold for $k,\ell\in\N$ if $\ell/k$ is not a power of $2$, as shown by the example below.
\begin{example}
Let $X_1, X_2, X_3$ be iid St.~Petersburg lotteries. Then 
\begin{align*}
    \p\(\frac{X_1 + X_2 + X_3}{3} <2^3\)&=\sum_{n=1}^42^{-n}\p\(X_2+X_3<24-2^n\)\\
    &=\sum_{n=1}^42^{-n}\sum_{m=1}^42^{-m}\p\(X_3<24-2^n-2^m\)\\
    &= \frac{1}{2} \times \frac{55}{64} + \frac{1}{4}\times  \frac{53}{64} + \frac{1}{8}  \times   \frac{24}{32} + \frac{1}{16}  \times \frac{1}{2} \\
&= \frac{195}{256} >0.75=\P(X_1 < 2^3).
\end{align*}
Thus, $X_1\le_{\rm st} (X_1+X_2+X_3)/3$ does not hold.
\end{example}

\section{Extensions of Theorem \ref{th:casen}}\label{sec:3}

In this section, to further understand how diversification works for extremely heavy-tailed Pareto random variables, we generalize Theorem \ref{th:casen} to several models. Specifically, in Section \ref{sec:3.1}, we study Pareto random variables caused by events with a low probability of occurrence. Section \ref{sec:3.2} considers non-negative random variables whose tail region follows a Pareto distribution. A result regarding Pareto distributions with an upper bound is given in Section \ref{sec:3.3}. Finally, Section \ref{sec:3.4} examines the case of positively dependent Pareto random variables using a stochastic representation of Pareto distributions. 

\subsection{Triggering event}\label{sec:3.1}

Extremely heavy-tailed phenomenons are often triggered by events with very small probabilities of occurrence. For instance, profits from technology breakthroughs and losses caused by severe catastrophes can be extremely heavy-tailed. In this context, it is natural to model the outcome of a rare event as $X\id_A$ where $X$ is a Pareto$(\alpha)$ random variable and $A$ is the triggering event independent of $X$ (i.e., given the occurrence of event $A$, the outcome has a Pareto$(\alpha)$ distribution). Let $X_1,\dots,X_n$ be  iid Pareto$(\alpha)$ random variables with $\alpha\in(0,1]$, and $A_1,\dots,A_n$ be the respective triggering events of  $X_1,\dots,X_n$ such that $A_1,\dots,A_n$ are independent of  $X_1,\dots,X_n$. Let $\bm Y= (X_1\id_{A_1},\dots,X_n\id_{A_n})$.

If $A_1=\dots=A_n$, then $X_1,\dots,X_n$ represent different outcomes caused by the same triggering event. For $\bm\th, \bm\eta\in \R_+^n$ satisfying $\bm\th \preceq \bm\eta$, we obtain from Theorem \ref{th:casen} the following stochastic inequality:
\begin{equation*}
     \bm\th\cdot \bm Y \ge_{\rm st} \bm\eta \cdot \bm Y.
\end{equation*}
The theorem below shows that the above inequality also holds for any events $A_1,\dots,A_n$ with an equal probability of occurrence and an arbitrary dependence. 

\begin{theorem}
 \label{th:cata}
Let  $X_1,\dots,X_n$ be iid $\mathrm{Pareto}(\alpha) $ random variables with $\alpha\in(0,1]$, and $A_1,\dots,A_n$ be events with equal probability, which are independent of $X_1,\dots,X_n$. Suppose that $\bm Y= (X_1\id_{A_1}$, $\dots, X_n\id_{A_n})$. For $\bm\th, \bm\eta\in \R_+^n$ satisfying $\bm\th \preceq \bm\eta$, we have $\bm\th\cdot \bm Y \ge_{\rm st} \bm\eta\cdot \bm Y$.
\end{theorem}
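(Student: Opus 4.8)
The plan is to start from Lemma~\ref{lem:maj} and reduce the claim to the case where $\bm\theta$ is a single $T$-transform of $\bm\eta$ (Definition~\ref{def:equitable}); since $\ge_{\rm st}$ is preserved along a finite composition of such steps, treating one transform suffices. I may then assume $\theta_k=\eta_k$ for all $k\notin\{i,j\}$, and after relabeling the modified pair conveniently, $\eta_i\ge\theta_i\ge\theta_j\ge\eta_j\ge0$ with $\theta_i+\theta_j=\eta_i+\eta_j$ and common gap $\epsilon:=\eta_i-\theta_i=\theta_j-\eta_j\ge0$. The natural next move is to condition on everything except the two Pareto variables attached to the modified coordinates: set $\mathcal H=\sigma(\{X_k,\id_{A_k}\}_{k\ne i,j})$, so that $S:=\sum_{k\ne i,j}\eta_k X_k\id_{A_k}=\sum_{k\ne i,j}\theta_k X_k\id_{A_k}$ is $\mathcal H$-measurable and the four joint outcomes of $(\id_{A_i},\id_{A_j})$ partition the space.

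On the pattern $A_i^c\cap A_j^c$ the two portfolios coincide, contributing nothing. On $A_i\cap A_j$ both modified variables are active, and since $X_i,X_j$ are iid $\mathrm{Pareto}(\alpha)$ with $(\theta_i,\theta_j)\preceq(\eta_i,\eta_j)$, Lemma~\ref{le-3-1} together with the fact that adding the $\mathcal H$-measurable constant $S$ preserves $\ge_{\rm st}$ (exactly the device used in the proof of Theorem~\ref{th:casen}) gives $\theta_iX_i+\theta_jX_j+S\ge_{\rm st}\eta_iX_i+\eta_jX_j+S$. So these two patterns can only help. The genuinely delicate patterns are the single-active ones, $A_i\cap A_j^c$ and $A_i^c\cap A_j$. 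Here the decisive analytic input is concavity: for $\alpha\le1$ the map $w\mapsto\phi(w):=\P(S+wX>t)$, with $X\sim\mathrm{Pareto}(\alpha)$ independent of $S$, is increasing and \emph{concave}, because $\P(S+wX>t)=\E[g(w;S)]$ where $g(w;s)=\min\{1,(w/(t-s))^\alpha\}$ for $s<t$ and $g(w;s)=1$ for $s\ge t$, each concave and increasing in $w$ precisely when $\alpha\le1$. On $A_i\cap A_j^c$ the active weight drops from $\eta_i$ to $\theta_i$, losing $\mu:=\phi(\eta_i)-\phi(\theta_i)\ge0$; on $A_i^c\cap A_j$ it rises from $\eta_j$ to $\theta_j$, gaining $\nu:=\phi(\theta_j)-\phi(\eta_j)\ge0$. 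Since both increments span intervals of the same length $\epsilon$ and $[\eta_j,\theta_j]$ sits below $[\theta_i,\eta_i]$, concavity yields the pointwise bound $\nu\ge\mu$. This is the exact mechanism through which the infinite-mean assumption $\alpha\le 1$ produces the diversification benefit.

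The main obstacle—and the step I expect to demand the full strength of the equal-probability hypothesis—is combining the two single-active patterns. Equal probabilities give the aggregate identity $\P(A_i\cap A_j^c)=\P(A_i^c\cap A_j)$, and because $X_i$ and $X_j$ are identically distributed the two single-active contributions are governed by the \emph{same} function $\phi$; coupled with $\nu\ge\mu$, this is what should force the gain on coordinate $j$ to dominate the loss on coordinate $i$. The hard part will be that, after conditioning, the conditional law of the remainder $S$ need not agree on the two single-active events, so the cancellation cannot be read off one $\mathcal H$-atom at a time; reconciling these two conditional distributions of $S$—leveraging that all the $A_k$ share a common probability and that the $X_k$ are exchangeable and independent of the events—is where the real work concentrates. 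Once this matching is in place, I would add the nonnegative both-active term, the vanishing both-inactive term, and the net-nonnegative single-active term to obtain $\P(\bm\theta\cdot\bm Y>t)\ge\P(\bm\eta\cdot\bm Y>t)$ for every $t$, which is exactly $\bm\theta\cdot\bm Y\ge_{\rm st}\bm\eta\cdot\bm Y$.
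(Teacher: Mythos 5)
Your skeleton is right where it overlaps with the paper: reduction to a single $T$-transform via Lemma~\ref{lem:maj}, the four-pattern decomposition on $(\id_{A_i},\id_{A_j})$, Lemma~\ref{le-3-1} plus convolution for the both-active pattern, and the concavity of $w\mapsto \p(S+wX>t)$ for $\alpha\le1$ as the engine behind the single-active patterns. For $n=2$ this closes completely (there $S\equiv 0$, the two single-active patterns share the one concave function $\phi(w)=\min\{1,(w/t)^\alpha\}$, and $\p(A_1\cap A_2^c)=\p(A_1^c\cap A_2)$ follows from equal marginals), and it is arguably cleaner than the paper's case-by-case computation of $S(\lambda)$. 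The genuine gap is exactly the step you flag and defer: for $n\ge3$ the conditional law of $S$ on $A_i\cap A_j^c$ need not agree with that on $A_i^c\cap A_j$, so your loss $\mu$ and gain $\nu$ are increments of two \emph{different} concave functions, and the bound $\nu\ge\mu$ no longer applies. Equal marginal probabilities give only the aggregate identity $\p(A_i\cap A_j^c)=\p(A_i^c\cap A_j)$, not the pattern-by-pattern identity $\p(A_i\cap A_j^c\cap C)=\p(A_i^c\cap A_j\cap C)$ for each configuration $C$ of the remaining events, which is what your matching step actually requires.

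This obstacle is not merely technical; under the stated hypotheses it cannot be overcome. Take $n=3$, $\alpha=1$, and let $(\id_{A_1},\id_{A_2},\id_{A_3})$ equal $(1,0,0)$, $(0,1,1)$, $(0,0,0)$ with probabilities $0.1$, $0.1$, $0.8$, so $\p(A_1)=\p(A_2)=\p(A_3)=0.1$. Let $\bm\eta=(1,0,1)$ and $\bm\th=(\tfrac12,\tfrac12,1)$, so $\bm\th\prec\bm\eta$. Then $\p(\bm\eta\cdot\bm Y>1.2)=0.1\,\p(X_1>1.2)+0.1\,\p(X_3>1.2)=1/6$, while $\p(\bm\th\cdot\bm Y>1.2)=0.1\,\p(X_1>2.4)+0.1\,\p(\tfrac12X_2+X_3>1.2)=1/24+1/10=17/120<1/6$, since $\tfrac12X_2+X_3\ge\tfrac32$ almost surely. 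So $\bm\th\cdot\bm Y\ge_{\rm st}\bm\eta\cdot\bm Y$ fails, and no completion of your argument (or any other) can exist at this level of generality. For what it is worth, the paper's own proof of the $n\ge3$ case commits precisely the step you were wary of: after fixing a configuration $D$ of the remaining events it evaluates $\p\big(\id_{A_k}\id_{A_\ell^c}(\cdots)+\id_{A_k^c}\id_{A_\ell}(\cdots)>s\big)$ as $\p(A_k\cap A_\ell^c)(\cdots+\cdots)$, dropping the configuration indicators and thereby silently assuming the pattern-by-pattern matching. Your plan does go through verbatim if one adds a hypothesis that forces that matching, e.g.\ exchangeability of $(A_1,\dots,A_n)$, or independence of $(A_i,A_j)$ from the remaining events; you should state such a condition explicitly rather than hope the aggregate identity suffices.
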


\begin{proof}
We assume $\|\bm\th\|=\|\bm\eta\|=1$ without losing generality. Below, we first show the case that $n=2$. For $\lm\in(0,1/2]$, let $S(\lm)=\p\(\lm Y_1+(1-\lm)Y_2>x\)$, where $x\in \R$. It suffices to show that $S(\lambda)$ is increasing in $\lm\in(0,1/2]$. Let $\p(A_1)=\p(A_2)=p\in(0,1)$. If $x<0$, $S(\lambda)=1$ for all $\lm\in(0,1/2]$. If $0\le x<\lm$, $S(\lm)=1-\p(A_1^c\cap A_2^c)$. If $\lm\le x\le 1-\lm$, we have
\begin{eqnarray*}
    S(\lm) &=& \p\(\{\lm X_1>x\}\cap\{A_1\cap A_2^c\}\)+\p\(A_1^c\cap A_2\)+\p\(A_1\cap A_2\)\\
           &=& \frac{\lm^\alpha}{x^\alpha}\p\(A_1\cap A_2^c\)+\p\(A_2\).
\end{eqnarray*}
As $\p(A_1)=\p(A_2)$, $\p(A_1\cap A_2^c)=\p(A_1^c\cap A_2)$. Then, if $1-\lm< x<1$, we have
\begin{eqnarray*}
     S(\lm)&=&\p(\{\lm X_1>x\}\cap\{A_1\cap A_2^c\})+\p(\{(1-\lm) X_2>x\}\cap\{A_1^c\cap A_2\})+\p(A_1\cap A_2)\\
           &=&\frac{\lm^\alpha+(1-\lm)^\alpha}{x^\alpha}\p(A_1\cap A_2^c)+\p(A_1\cap A_2).
\end{eqnarray*}
Take $a,b$ such that $0< a\le b \le 1/2$. Using the fact that $\alpha\le 1$, it is clear that $S(a)\le S(b)$ if $x\in[0,1-b]\cup(1-a,1)$. If $1-b\le x\le 1-a$,
\begin{eqnarray*}
     S(b)-S(a) &=& \frac{\p(A_1\cap A_2^c)}{x^\alpha} \big [b^\alpha+(1-b)^\alpha \big ]+\p(A_1\cap A_2)-\frac{\p(A_1\cap A_2^c)}{x^\alpha}a^\alpha-\p(A_2)\\
        &=& \frac{\p(A_1\cap A_2^c)}{x^\alpha} \big [b^\alpha+(1-b)^\alpha-a^\alpha-x^\alpha \big ]\\
        &\ge & \frac{\p(A_1\cap A_2^c)}{x^\alpha}\big [b^\alpha+(1-b)^\alpha-a^\alpha-(1-a)^\alpha \big ]\ge 0.
\end{eqnarray*}
Hence, $S(\lm)$ is increasing on $(0,1/2]$ if $0\le x<1$. If $x\ge1$, we have
\begin{eqnarray*}
    S(\lm)&=&\p(\{\lm X_1>x\}\cap\{A_1\cap A_2^c\})+\p(\{(1-\lm) X_2>x\}\cap\{A_1^c\cap A_2\})\\
     & & +\p(\{\lm X_1+(1-\lm) X_2>x\}\cap\{A_1\cap A_2\})\\
     &=&\frac{\p(A_1\cap A_2^c)}{x^\alpha} \big [\lm^\alpha+(1-\lm)^\alpha\big ] +\p(A_1\cap A_2)\p\big (\lm X_1+(1-\lm) X_2>x\big ).
\end{eqnarray*}
By Lemma \ref{le-3-1}, it is clear that $S(\lambda)$ is strictly increasing on $(0,1/2]$ if $x>1$. Hence, we have the result for $n=2$.

Next, we show the result for $n\ge 3$. By Lemma \ref{lem:maj}, it suffices to show that $\bm\th\cdot \bm Y \ge_{\rm st} \bm\eta\cdot \bm Y$ holds for $\bm\th$ being a $T$-transform of $\bm\eta$.  Write  $\bm\th=(\th_1,\dots, \th_n)$ and $\bm\eta =(\eta_1,\dots,\eta_n)$. Take $k,\ell \in[n]$ such that $k\ne \ell$ and $(\th_k,\th_\ell)$ is a $T$-transform of $(\eta_k,\eta_\ell)$. For $S\subseteq [n]$, let $B_S=(\bigcap_{i\in S }A_i)\cap(\bigcap_{i\in S^c }A_i^c)$. For $(\theta_1,\dots,\theta_n)\in \R_+^n$, we write
\begin{align*}
    \sum_{i=1}^n\theta_iX_i\id_{A_i}=&\sum_{S\subseteq[n]/\{k,\ell\}}\id_{B_S}\sum_{i\in S}\theta_iX_i+\sum_{\{k,\ell\}\subseteq S\subseteq[n]}\id_{B_S}\sum_{i\in S}\theta_iX_i\\
    & +\sum_{\{k\}\subseteq S\subseteq[n]/\{\ell\}}\id_{B_S}\sum_{i\in S}\theta_iX_i+\sum_{\{\ell\}\subseteq S\subseteq[n]/\{k\}}\id_{B_S}\sum_{i\in S}\theta_iX_i.
\end{align*}
It is clear that
$$
    \sum_{S\subseteq[n]/\{k,\ell\}}\id_{B_S}\sum_{i\in  S} \th_i X_i = \sum_{S\subseteq[n]/\{k,\ell\}}\id_{B_S}\sum_{i\in S}\eta_iX_i.
$$
By Theorem \ref{th:casen}, we have
$$
    \sum_{\{k,\ell\}\subseteq S\subseteq[n]}\id_{B_S}\sum_{i\in S}\theta_iX_i\ge_{\rm st} \sum_{\{k,\ell\}\subseteq S\subseteq[n]}\id_{B_S}\sum_{i\in S}\eta_iX_i.
$$
Note that
\begin{align*}
     \sum_{\{k\}\subseteq S\subseteq[n]/\{\ell\}}\id_{B_S}\sum_{i\in S}\theta_iX_i
           =\sum_{ D\subseteq[n]/\{k,\ell\}}\id_{A_k}\id_{A_\ell^c}\prod_{s\in D}\id_{A_s}\prod_{t\in D^c}\id_{A_t}\(\theta_kX_k+\sum_{i\in D}\theta_iX_i\).
\end{align*}
Then
\begin{align*}
   & \sum_{\{k\}\subseteq S\subseteq[n]/\{\ell\}}\id_{B_S}\sum_{i\in S}\theta_iX_i+\sum_{\{\ell\}\subseteq S\subseteq[n]/\{k\}}\id_{B_S}\sum_{i\in S}\theta_iX_i\\
   & \qquad = \sum_{ D\subseteq[n]/\{k,\ell\}}\prod_{s\in D}\id_{A_s}\prod_{t\in D^c}
      \id_{A_t}\(\id_{A_k}\id_{A_\ell^c}\bigg (\theta_kX_k+\sum_{i\in D}\theta_iX_i\bigg )
      +\id_{A_k^c}\id_{A_\ell}\bigg (\theta_\ell X_\ell+\sum_{i\in D}\theta_iX_i\bigg ) \).
\end{align*}
For $D\subseteq[n]/\{k,\ell\}$, let $f$ denote the density of $\sum_{i\in D}\theta_iX_i$. For $s>0$, we have
\begin{align*}
    &\p\(\id_{A_k}\id_{A_\ell^c}\bigg (\theta_kX_k+\sum_{i\in D}\theta_iX_i\bigg )
            +\id_{A_k^c}\id_{A_\ell}\bigg (\th_\ell X_\ell+\sum_{i\in D}\th_i X_i\bigg )>s\)\\
      & \quad = \p(A_k\cap A_\ell^c)\(\p\bigg ( \theta_kX_k+\sum_{i\in D}\theta_iX_i>s\bigg )
             +\p\bigg (\theta_\ell X_\ell+\sum_{i\in D}\theta_iX_i>s\bigg ) \)\\
      &\quad = \p(A_k\cap A_\ell^c)\(\int_{-\infty}^\infty\p(\theta_kX_k>s-t)f(t)\,\d t
              +\int_{-\infty}^\infty\p(\theta_\ell X_\ell>s-t)f(t)\,\d t\)\\
      &\quad = \p(A_k\cap A_\ell^c)\int_{-\infty}^\infty\(\p(\theta_k X_k>s-t)
                +\p(\theta_\ell X_\ell>s-t)\)f(t)\,\d t\\
      & \quad \ge \p(A_k\cap A_\ell^c)\int_{-\infty}^\infty\(\p(\eta_kX_k>s-t)
             +\p(\eta_\ell X_\ell>s-t)\)f(t)\, \d t.
\end{align*}
The above inequality is obtained by applying the result for $n=2$ on mutually exclusive events. Hence, we have the result for $n\ge 3$.
\end{proof}

\subsection{Pareto tails}\label{sec:3.2}

In practice, random variables may not follow Pareto distributions in their entire support, whereas they have power-like distributions beyond some high thresholds  (implied by the Pickands-Balkema-de Haan Theorem; see, e.g., Theorem 3.4.13 (b) in \cite{EKM97}). Therefore, it is practically useful to assume that a random variable has a Pareto distribution only in the tail region. For $\alpha>0$, we say that $Y$ has a {Pareto}($\alpha$) distribution beyond $c\ge 1$  if  $\p(Y>t)=t^{-\alpha}$ for $t\ge c$ and $\p(Y>0)=1$.                                              

\begin{proposition}
 \label{prop:tail}
Let $\bm Y =(Y_1,\dots,Y_n)$ be a vector of $n$ iid $\mathrm{Pareto}(\alpha) $ random variables beyond $c\ge 1$ with $\alpha\in(0,1]$.  For $\bm\th, \bm\eta\in \R_+^n$ satisfying $\bm\th\prec \bm\eta$, we have $\p(\bm\th\cdot \bm Y>x)>\p(\bm\eta\cdot \bm Y>x)$ for $x>c\|\bm\th\|$.
\end{proposition}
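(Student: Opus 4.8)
The plan is to normalize $\|\bm\th\|=\|\bm\eta\|=1$ (so the threshold reads $x>c$) and then, via Lemma \ref{lem:maj}, reduce to the case in which $\bm\th$ is a single $T$-transform of $\bm\eta$, i.e. $\bm\th$ and $\bm\eta$ differ only in two coordinates, say $1$ and $2$, with $\th_k=\eta_k$ for $k\ge 3$; all intermediate vectors in the chain share the norm $1$, so the threshold $x>c\|\bm\th\|$ is uniform along the chain. The decisive use of the hypothesis $x>c\|\bm\th\|$ is the following: since every $Y_i$ is bounded by $c$ off its tail region $\{Y_i>c\}$ and $\sum_i\th_i=\sum_i\eta_i=1$, the event $\{\bm\th\cdot\bm Y>x\}$ with $x>c$ forces at least one coordinate to satisfy $Y_i>c$; in particular the configuration in which all coordinates are $\le c$ contributes zero to both $\p(\bm\th\cdot\bm Y>x)$ and $\p(\bm\eta\cdot\bm Y>x)$. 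This is precisely what removes the finite-mean obstruction of Proposition \ref{prop:infinite}, which would otherwise act through the bounded part of the law.

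For $n=2$ I would condition on the exceedance pattern of $(Y_1,Y_2)$ relative to $c$. The all-below-$c$ cell drops out by the remark above. On the both-above-$c$ cell I would write $Y_i=cX_i$ with $X_1,X_2$ iid $\mathrm{Pareto}(\alpha)$ (the conditional law of $Y_i/c$ given $Y_i>c$ is exactly $\mathrm{Pareto}(\alpha)$) and invoke Lemma \ref{le-3-1}; since $x>c$ forces the residual threshold $x/c>1=\th_1+\th_2$ to lie in the region where Lemma \ref{le-3-1} is strict, this cell supplies the strict inequality in the conclusion. On each exactly-one-above cell I would condition further on the bounded coordinate $Z\in[0,c]$; because $x>c$, every surviving Pareto probability is evaluated in the pure power regime $\p(X>t)=t^{-\alpha}$, so the comparison collapses to a Schur-concavity statement for $(a_1,a_2)\mapsto a_1^\alpha(x-a_2 z)^{-\alpha}+a_2^\alpha(x-a_1 z)^{-\alpha}$ on $\{a_1+a_2=1\}$, which I would establish from the concavity of $t\mapsto t^\alpha$ for $\alpha\le 1$ via a monotonicity-in-the-balancing computation in the spirit of the function $H$ in Lemma \ref{le-3-1}.

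For $n\ge 3$ I would decompose over the exceedance patterns $T\subseteq[n]$, writing $\bm\th\cdot\bm Y=c\sum_{i\in T}\th_i X_i+\sum_{i\notin T}\th_i Z_i$ on the cell $\{Y_i>c\Leftrightarrow i\in T\}$, paralleling the proof of Theorem \ref{th:cata}. Patterns with $\{1,2\}\subseteq T$ reduce to Theorem \ref{th:casen} applied to the restricted (still majorization-ordered) weights on $T$ and are favourable; patterns with exactly one of $1,2$ in $T$ are grouped and handled by the $n=2$ mechanism above. The main obstacle, and the genuinely new feature compared with Theorem \ref{th:cata} (where untriggered coordinates contribute exactly zero), is the family of cells with $\{1,2\}\cap T=\emptyset$ but $T\neq\emptyset$: here the perturbation acts only on the bounded coordinates, comparing $\th_1 Z_1+\th_2 Z_2$ with $\eta_1 Z_1+\eta_2 Z_2$ added to a common heavy-tailed part, and in isolation this bounded, finite-mean perturbation favours the more concentrated vector $\bm\eta$ (through the convex order $\th_1 Z_1+\th_2 Z_2\lcx\eta_1 Z_1+\eta_2 Z_2$ together with the convexity of $s\mapsto\p(c\sum_{i\in T}\eta_i X_i>s)$ on the relevant range). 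Thus the inequality cannot be proved cell by cell; the negative contribution of these cells must be outweighed globally. I expect the crux to be a quantitative argument showing that, after integrating against the heavy-tailed law of the common part $\sum_{i\ge 3}\th_i Y_i$, the favourable region (where the residual threshold exceeds $c(\th_1+\th_2)$) dominates the thin boundary layer where it does not, with the heavy Pareto tail and the Schur-concavity of $t\mapsto\sum_i t_i^\alpha$ supplying the margin and the threshold $x>c\|\bm\th\|$ guaranteeing that the all-bounded cell contributes nothing.
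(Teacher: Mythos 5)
Your $n=2$ argument is, in substance, the paper's own: the same three-cell split over exceedances of $c$ (with the all-below cell annihilated by $x>c\|\bm\th\|$), the same observation that $Y_i/c$ conditioned on $\{Y_i>c\}$ is $\mathrm{Pareto}(\alpha)$ so that the both-above cell is handled by Lemma \ref{le-3-1} at the residual threshold $x/c>1$, and the same Schur-concavity of $(a_1,a_2)\mapsto a_1^\alpha(x-a_2y)^{-\alpha}+a_2^\alpha(x-a_1y)^{-\alpha}$ on the mixed cells, which the paper obtains by checking that $t\mapsto \bigl(t/(x-(1-t)y)\bigr)^{\alpha}$ is concave when $x> c\ge y$ and $\alpha\le1$. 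That part of your proposal is complete and correct.

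The gap is the case $n\ge 3$, which you do not close. You decompose over all $2^n$ exceedance patterns $T$, correctly note that the cells with $T\cap\{i,j\}=\emptyset$ and $T\neq\emptyset$ are individually unfavourable (there only the bounded coordinates feel the perturbation, and $\th_iZ_i+\th_jZ_j\le_{\rm cx}\eta_iZ_i+\eta_jZ_j$ works against you), and then defer the required global cancellation to an unproven ``quantitative argument.'' That is not a proof. The paper does not use this fine decomposition: it passes from $n=2$ to general $n$ by the convolution step of Theorem \ref{th:casen} --- reduce to a single $T$-transform on coordinates $i,j$, condition on $S_{i,j}=\sum_{k\ne i,j}\th_kY_k=\sum_{k\ne i,j}\eta_kY_k=s$, and invoke the two-dimensional case at the residual threshold $x-s$; the hypothesis $x>c\|\bm\th\|$ guarantees $x-s>c(\th_i+\th_j)$ exactly on the event $\{Y_k\le c \text{ for all } k\ne i,j\}$, so no problematic cell is ever isolated there. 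Be aware, however, that on the complementary event (some $Y_k>c$, $k\ne i,j$) the residual threshold can drop below $c(\th_i+\th_j)$, where the $n=2$ statement gives no information and the pairwise comparison can in fact reverse for suitable sub-$c$ laws; this is your obstacle resurfacing in the paper's coordinates, and the paper compresses it into ``by a similar proof as that of Theorem \ref{th:casen}'' without further detail. To repair your write-up you should drop the full exceedance-pattern decomposition, adopt the conditioning-on-$S_{i,j}$ route, and supply the missing argument on $\{\exists\, k\ne i,j: Y_k>c\}$ rather than hoping for a cell-by-cell domination that, as you yourself observe, cannot hold.
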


\begin{proof}
We assume that $\|\bm\th\|=\|\bm\eta\|=1$ without losing generality. Below, we show the case that $n=2$. For $\lm\in(0,1/2]$, let $S(\lm)=\p\(\lm Y_1+(1-\lm)Y_2>x\)$ where $x>c$. It suffices to show that $S(\lambda)$ is increasing as $\lm$ increases. Denote by $\mu$ the probability measure of $Y_1$ and $Y_2$. We have
\begin{align*}
   S(\lm) &=\p\(\lm Y_1+(1-\lm)Y_2>x\)\\
    & =\p\(\lm Y_1+(1-\lm)Y_2>x, Y_2\le c\)+\p\(\lm Y_1+(1-\lm)Y_2>x,Y_1\le c\)\\
    & \qquad +\p\(\lm Y_1+(1-\lm)Y_2>x,Y_1>c, Y_2> c\)\\
    & =\int_0^c \p\(Y_1>\frac{x-(1-\lm)y}{\lm}\)\,\d\mu(y)+\int_0^c \p\(Y_2>\frac{x-\lm y}{1-\lm}\) \d\mu(y)\\
    & \qquad +\p\(\lm Y_1+(1-\lm)Y_2>x,Y_1>c, Y_2> c\)\\
    & =\int_{0}^c \(\(\frac{\lm}{x-(1-\lm)y}\)^\alpha+\(\frac{1-\lm}{x-\lm y}\)^\alpha\)\d \mu(y)
            +\p\(\lm Y_1+(1-\lm)Y_2>x,Y_1>c, Y_2> c\).
 \end{align*}
For $\lm\in(0,1/2]$, let
\begin{align*}
      S_1(\lm) & =\int_{-\infty}^c \(\(\frac{\lm}{x-(1-\lm)y}\)^\alpha+\(\frac{1-\lm}{x-\lm y}\)^\alpha\)\d\mu(y);\\
  S_2(\lm) &=\p\(\lm Y_1+(1-\lm)Y_2>x, Y_1>c, Y_2> c\).
\end{align*}
Fix $x>c$ and $y\le c$. For $t\in (0,1)$, let
$$     g(t)=\(\frac{t}{x-(1-t)y}\)^\alpha.        $$
Taking the second derivative of $g$, we have
$$
        g''(t)=\frac{\alpha(x-y)x^\alpha(x-(1-t)y)^{-\alpha}((\alpha-1)(x-y)-2ty)}{t^2(x-(1-t)y)^2}\le0.
$$
Therefore, $g$ is concave, which implies that $g(\lm)+g(1-\lm)$ is increasing on $(0,1/2]$. Hence, $S_1(\lm)$ is increasing on $(0,1/2]$.

Let $X_1, X_2$ be iid $\mathrm{Pareto}(\alpha)$ random variables. Note that, for $t\ge c>1$, $\p(Y_1>t|Y_1>c)=(c/t)^\alpha=\p(c X_1>t)$. Hence, we have
$$
      \p\(\lm Y_1+(1-\lm)Y_2>x|Y_1>c, Y_2> c\)=\p\(\lm X_1+(1-\lm) X_2>\frac{x}{c}\).
$$
By Lemma \ref{le-3-1}, $S_2(\lambda)$ is strictly increasing on $(0,1/2]$. Hence, the result for $n=2$ is shown. Next, by a similar proof as that of Theorem \ref{th:casen}, we obtain the desired result for $n>2$.
\end{proof}

In the realm of insurance, excess-of-loss reinsurance is frequently used, i.e., only losses beyond some high threshold are covered. Proposition \ref{prop:tail} leads to the following corollary.

\begin{corollary}
   \label{cor:tail}
Let $X_1,\dots,X_n$ be iid $\mathrm{Pareto}(\alpha) $ random variables with $\alpha\in(0,1]$, and  $\bm\th, \bm\eta\in \R_+^n$ satisfy $\bm\th\preceq \bm\eta$.
\begin{enumerate}[(i)]
  \item Let $\bm Y= (X_1\vee c,\dots,X_n\vee c)$ where $c\ge 1$. Then
   $\bm\th\cdot \bm Y \ge_{\rm st} \bm\eta\cdot \bm Y$. Moreover, if $\bm\th\prec \bm\eta$, then $\bm\th\cdot \bm Y >_{\rm st} \bm\eta\cdot \bm Y$.

  \item Let $\bm Y= ((X_1-c)_+,\dots,(X_n-c)_+)$ where $c\ge 1$. Then $\bm\th\cdot \bm Y \ge_{\rm st} \bm\eta \cdot \bm Y$. Moreover, if $\bm\th \prec \bm\eta$, then $\bm\th\cdot \bm Y >_{\rm st} \bm\eta\cdot \bm Y$.
\end{enumerate}
\end{corollary}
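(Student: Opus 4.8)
The plan is to obtain both parts as direct consequences of Proposition \ref{prop:tail}, after observing that the transformed variables in (i) are themselves Pareto beyond a threshold, and that the variables in (ii) differ from those in (i) only by a common additive constant.

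For part (i), the first step is to check that $Y_i=X_i\vee c$ meets the defining conditions of a $\mathrm{Pareto}(\alpha)$ distribution beyond $c$. Since $X_i\ge 1$ almost surely and $c\ge 1$, we have $Y_i\ge c>0$, so $\p(Y_i>0)=1$; and for $t\ge c$ the event $\{X_i\vee c>t\}$ coincides with $\{X_i>t\}$, whence $\p(Y_i>t)=t^{-\alpha}$. Thus $(X_1\vee c,\dots,X_n\vee c)$ is precisely a vector of iid variables that are $\mathrm{Pareto}(\alpha)$ beyond $c$, and Proposition \ref{prop:tail} applies verbatim: for $\bm\th\prec\bm\eta$ it yields $\p(\bm\th\cdot\bm Y>x)>\p(\bm\eta\cdot\bm Y>x)$ for every $x>c\|\bm\th\|$. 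To upgrade this to a statement valid for all $x$, note that each $Y_i\ge c$ and all weights are nonnegative with $\|\bm\th\|=\|\bm\eta\|$ (equality of sums being part of majorization), so both $\bm\th\cdot\bm Y$ and $\bm\eta\cdot\bm Y$ are bounded below by $c\|\bm\th\|$ almost surely; hence for $x<c\|\bm\th\|$ both survival functions equal $1$ and coincide. Combined with the strict inequality on $x>c\|\bm\th\|$, this gives $\bm\th\cdot\bm Y\ge_{\rm st}\bm\eta\cdot\bm Y$ throughout and $\bm\th\cdot\bm Y>_{\rm st}\bm\eta\cdot\bm Y$ on the range where the distributions are non-degenerate. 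The non-strict assertion for $\bm\th\preceq\bm\eta$ follows by treating the trivial case $\bm\th=\bm\eta$ separately.

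For part (ii), the key identity is $(X_i-c)_+=(X_i\vee c)-c$. Writing $\bm Y=(X_1\vee c,\dots,X_n\vee c)$ and $\bm Y'=((X_1-c)_+,\dots,(X_n-c)_+)$, this gives $\bm\th\cdot\bm Y'=\bm\th\cdot\bm Y-c\|\bm\th\|$ and $\bm\eta\cdot\bm Y'=\bm\eta\cdot\bm Y-c\|\bm\eta\|$, and since $\|\bm\th\|=\|\bm\eta\|$ the two portfolios in (ii) are obtained from those in (i) by subtracting the same constant $c\|\bm\th\|$. As first-order stochastic dominance (and its strict form) is invariant under translation by a common constant, part (ii) is immediate from part (i).

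The analysis is light because Proposition \ref{prop:tail} carries the probabilistic content; the only points requiring care are verifying that $X_i\vee c$ matches the ``Pareto beyond $c$'' definition exactly and handling the lower range $x\le c\|\bm\th\|$, where the common deterministic lower bound $c\|\bm\th\|$ creates an atom forcing the two survival functions to agree. I would expect the genuine subtlety to be the behaviour exactly at $x=c\|\bm\th\|$: both portfolios may place equal mass at their common minimum, so strictness is truly available only on $x>c\|\bm\th\|$, consistent with the range stated in Proposition \ref{prop:tail}.
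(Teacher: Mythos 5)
Your proof takes exactly the same route as the paper, which disposes of the corollary in one line: part (i) is a special case of Proposition \ref{prop:tail} because $X_i\vee c$ is precisely $\mathrm{Pareto}(\alpha)$ beyond $c$, and part (ii) follows from (i) by the location identity $(X_i-c)_+=(X_i\vee c)-c$ together with $\|\bm\th\|=\|\bm\eta\|$, so the two portfolios differ from those in (i) by the same constant. The boundary subtlety you flag at $x=c\|\bm\th\|$ is genuine --- for $c>1$ both portfolios can carry equal atoms at their common minimum (e.g.\ when $\bm\th$ and $\bm\eta$ have the same set of nonzero components), so the strict inequality supplied by Proposition \ref{prop:tail} is only available for $x>c\|\bm\th\|$ --- but the paper's own one-line proof does not address this point either, and your treatment is at least as careful as the original.
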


\begin{proof}
The statement (i) is a special case of Proposition \ref{prop:tail}, and (ii) is obtained from (i) by shifting locations of the random variables in (i).
\end{proof}

For the end of this subsection, we have the following Proposition \ref{non} regarding the case of different marginals. 

\begin{proposition}
\label{non}
Let $\bm X = (X_1, \dots, X_n)$ be a vector with independent components such that $X_i \sim {\rm Pareto}(\alpha_i)$ for $i = 1, \dots, n$, where $\alpha_i \in (0,1]$ such that $\alpha_1 \le \dots \le \alpha_n$. Consider $\bm Y= ((X_1-c)_+,\dots,(X_n-c)_+)$ where $c\ge 1$, and  $\bm\th, \bm\eta\in \R_+^n$ satisfying $\boldsymbol{\theta} \preceq \bm\eta$. Then $\bm\th^\uparrow \cdot \bm Y \ge_{\rm st} \bm\eta^\uparrow \cdot \bm Y$. Moreover, if $\bm\th \prec \bm\eta$, then $\p(\bm\th^\uparrow \cdot \bm Y>x)>\p(\bm\eta^\uparrow \cdot \bm Y>x)$ for $x>0$.
\end{proposition}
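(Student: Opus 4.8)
The plan is to reduce to the bivariate case $n=2$ by the same $T$-transform-plus-convolution machinery used in the proofs of Theorem \ref{th:casen} and Theorem \ref{th:cata}: by Lemma \ref{lem:maj} together with Lemma 1 of \cite{Ma98} it suffices to treat a single $T$-transform acting on two coordinates $i<j$ (so $\alpha_i\le\alpha_j$), the remaining coordinates held fixed, and then invoke closure of $\ge_{\rm st}$ under convolution. For the strict statement, I would condition on $S_{i,j}=\sum_{k\ne i,j}\theta_{(k)}Y_k\ge 0$: for any $t>0$ the event $\{S_{i,j}<t\}$ has positive probability, and on it the bivariate strict inequality applies to the threshold $t-S_{i,j}>0$, which propagates strictness without needing a density (the excess components $Y_k$ carry an atom at $0$). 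Thus everything reduces to the claim that, for $Y_i=(X_i-c)_+$ with $X_i\sim{\rm Pareto}(\alpha_i)$ independent and $\alpha_1\le\alpha_2$, the map $S(\theta):=\p\(\theta Y_1+(1-\theta)Y_2>x\)$ is increasing on $(0,1/2]$ for each fixed $x>0$, strictly so.

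For this bivariate claim I would condition on which of $X_1,X_2$ exceed $c$, using that $Y_i=0$ on $\{X_i\le c\}$; on $\{X_1\le c,X_2\le c\}$ the event is empty for $x>0$, so $S=S_B+S_C+S_D$, where $S_B,S_C$ correspond to exactly one variable exceeding $c$ and $S_D$ to both. From $\p(X_i>c+t)=(c+t)^{-\alpha_i}$ and independence,
$$
S_B(\theta)=(1-c^{-\alpha_2})\(c+x/\theta\)^{-\alpha_1},\qquad S_C(\theta)=(1-c^{-\alpha_1})\(c+x/(1-\theta)\)^{-\alpha_2}.
$$
For $S_D$ the key observation is that, conditional on $X_i>c$, one has $X_i\overset{\d}{=}c\tilde X_i$ with $\tilde X_i\sim{\rm Pareto}(\alpha_i)$, whence $\theta(X_1-c)+(1-\theta)(X_2-c)\overset{\d}{=}c\(\theta\tilde X_1+(1-\theta)\tilde X_2-1\)$ and so $S_D(\theta)=c^{-\alpha_1-\alpha_2}\,\p\(\theta\tilde X_1+(1-\theta)\tilde X_2>1+x/c\)$. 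Since $1+x/c>1$, Lemma \ref{le-3-1} shows $S_D$ is \emph{strictly} increasing on $(0,1/2]$; this supplies the strictness for $x>0$ and reduces the whole problem to showing that $S_B+S_C$ is increasing.

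The monotonicity of $S_B+S_C$ is the main obstacle: $S_B$ is increasing but $S_C$ is decreasing, and the clean symmetry that settles the equal-index case is lost. (When $\alpha_1=\alpha_2=\alpha$ the sum is $(1-c^{-\alpha})[h(\theta)+h(1-\theta)]$ with $h(t)=(c+x/t)^{-\alpha}$ concave, hence increasing; for $\alpha_1\ne\alpha_2$ the two terms carry different exponents and different constants $1-c^{-\alpha_2}\ne 1-c^{-\alpha_1}$.) I would instead verify $(S_B+S_C)'(\theta)\ge0$ directly (the case $c=1$ is immediate since then $S_B\equiv S_C\equiv 0$, so take $c>1$). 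Writing this derivative inequality as $\Psi(\theta)\ge 0$ with
$$
\Psi(\theta)=\log\frac{(1-c^{-\alpha_2})\alpha_1}{(1-c^{-\alpha_1})\alpha_2}+(\alpha_1-1)\log\theta-(\alpha_2-1)\log(1-\theta)-(\alpha_1+1)\log(c\theta+x)+(\alpha_2+1)\log(c(1-\theta)+x),
$$
one computes
$$
\Psi'(\theta)=\frac{\alpha_1-1}{\theta}+\frac{\alpha_2-1}{1-\theta}-\frac{(\alpha_1+1)c}{c\theta+x}-\frac{(\alpha_2+1)c}{c(1-\theta)+x}<0,
$$
every term being nonpositive because $\alpha_1,\alpha_2\le 1$. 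Hence $\Psi$ is decreasing on $(0,1/2]$ and it suffices to check $\Psi(1/2)\ge 0$. At $\theta=1/2$ the expression collapses to $\Psi(1/2)=\log\frac{(1-c^{-\alpha_2})\alpha_1}{(1-c^{-\alpha_1})\alpha_2}+(\alpha_2-\alpha_1)\log(c+2x)$, which is increasing in $x$ (as $\alpha_2\ge\alpha_1$); letting $x\downarrow 0$ it becomes $\log\frac{\alpha_1(c^{\alpha_2}-1)}{\alpha_2(c^{\alpha_1}-1)}$, and this is nonnegative precisely because $\alpha\mapsto(c^{\alpha}-1)/\alpha$ is increasing, an elementary one-variable fact. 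This establishes that $S_B+S_C$ is increasing, completes the bivariate claim, and hence proves the proposition.
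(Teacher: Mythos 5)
Your proof is correct, and its skeleton coincides with the paper's: reduce to $n=2$ via $T$-transforms and closure of $\ge_{\rm st}$ under convolution, decompose $S(\theta)$ according to which of $X_1,X_2$ exceeds $c$ (your $S_B+S_C$ is the paper's $S_1$ and your $S_D$ is its $S_2$), handle the both-exceed term by the conditional scaling $\p(X_i>t\mid X_i>c)=\p(cX_i>t)$ and Lemma \ref{le-3-1}, and reduce the remaining term to the monotonicity of $\alpha\mapsto(c^{\alpha}-1)/\alpha$. The one genuine difference is how you verify $(S_B+S_C)'\ge 0$: the paper bounds $S_1'(\lm)$ directly by a chain of term-by-term estimates (using $\lm^{\alpha_1-1}\ge(1-\lm)^{\alpha_1-1}$ and similar comparisons), whereas you pass to the logarithm $\Psi$ of the ratio of the positive and negative parts of the derivative, observe $\Psi'<0$ termwise since $\alpha_1,\alpha_2\le 1$, and thereby reduce the whole inequality to the single endpoint evaluation $\Psi(1/2)$ at $x=0$; this is arguably cleaner and makes it transparent where the hypothesis $\alpha_i\le 1$ enters. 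You also treat the propagation of strictness to $n>2$ more carefully than the paper, which simply invokes ``a similar proof to that of Theorem \ref{th:casen}'': there the argument conditions on a density of $S_{i,j}$, which is not literally available here because each $Y_k=(X_k-c)_+$ has an atom at $0$; your fix --- conditioning on the positive-probability event $\{S_{i,j}<t\}$ and applying the strict bivariate inequality at the threshold $t-S_{i,j}>0$ --- closes that small gap.
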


\begin{proof}
We assume that $\|\bm\th\|=\|\bm\eta\|=1$ without losing generality. Below, we show the case that $n=2$. For $\lm\in(0,1/2]$, let $S(\lm)=\p\(\lm Y_1+(1-\lm)Y_2>x\)$ where $x\in \R$. It suffices to show that $S(\lambda)$ is increasing as $\lm$ increases. If $x<0$, $S(\lm)=1$ for all $\lm\in(0,1/2]$. If $x\ge0$,   we have
\begin{align*}
	S(\lm) &= \p\(\lm Y_1+(1-\lm)Y_2>x\)\\
	&= \p\(\lm Y_1>x, X_2\le c\)+\p\((1-\lm)Y_2>x,X_1\le c\) +\p\(\lm Y_1+(1-\lm)Y_2>x,X_1>c, X_2> c\)\\
  &=\(\frac {x}{\lm} +c\)^{-\alpha_1}(1-c^{-\alpha_2})+\(\frac {x}{1-\lm} +c\)^{-\alpha_2}(1-c^{-\alpha_1})\\
    & \quad  +\p\(\lm X_1+(1-\lm)X_2>x+c,X_1>c, X_2> c\).
\end{align*}
For $\lm\in(0,1/2]$, let
\begin{align*}
	S_1(\lambda)&= \(\frac {x}{\lm}+c\)^{-\alpha_1}(1-c^{-\alpha_2})
                +\(\frac {x}{1-\lm}+c\)^{-\alpha_2}(1-c^{-\alpha_1});\\
	S_2(\lambda)&= \p\(\lm X_1+(1-\lm)X_2>x+c,X_1>c, X_2> c\).
\end{align*}
Taking the derivative of $S_1(\lm)$, and noting that for $\alpha_1 \le \alpha_2$ and $\lm\in(0,1/2)$, we have
\begin{align*}
  S_1'(\lm) &= x\(\alpha_1\lm^{\alpha_1-1}\(x+c\lm\)^{-\alpha_1-1}(1\!-\! c^{-\alpha_2})
        - \alpha_2(1-\lm)^{\alpha_2-1} \(x+c(1-\lm)\)^{-\alpha_2-1} (1\!-\! c^{-\alpha_1})\) \\
    &\ge  x c^{-\alpha_2} (1-\lm)^{\alpha_1-1} \(x+c(1-\lm)\)^{-\alpha_2-1} \\
     & \quad \times \(\alpha_1 \(x+c(1-\lm)\)^{\alpha_2-\alpha_1} (c^{\alpha_2}-1)-\alpha_2\(c(1-\lm)\)^{\alpha_2- \alpha_1} (c^{\alpha_1}-1) \) \\
    & = xc^{-\alpha_2} (1-\lm)^{\alpha_1-1}\(x+c(1-\lm)\)^{-\alpha_2-1}\(c(1-\lm)\)^{\alpha_2- \alpha_1} (c^{\alpha_1}-1) \alpha_1 \\
    & \quad \times \(\(\frac{x+c(1-\lm)}{c(1-\lm)}\)^{\alpha_2-\alpha_1} \frac{c^{\alpha_2}-1}{c^{\alpha_1}-1} - \frac{\alpha_2}{\alpha_1}  \) \\
    & \ge x c^{-\alpha_2} (1\!-\!\lm)^{\alpha_1-1} \(x+c(1\!-\!\lm)\)^{-\alpha_2-1}
           \(c(1\!-\!\lm)\)^{\alpha_2- \alpha_1}  \frac{\alpha_1 \alpha_2}{c^{\alpha_1}-1} \(\frac{c^{\alpha_2}-1}{\alpha_2}-\frac{c^{\alpha_1}-1}{\alpha_1}\)\\
     &\ge 0,
\end{align*}
where the last inequality follows from the fact that the function $h(x):= (c^x-1)/x$ is increasing in $x\in (0,1]$. Therefore, $S_1(\lambda)$ is strictly increasing on $(0,1/2]$.
	
Note that, for $t\ge c>1$ and $i=1,2$, $\p(X_i>t|X_i>c)=(c/t)^{\alpha_i}=\p(cX_i>t)$. Hence, we have
$$   \p\(\lm X_1+(1-\lm)X_2>x+c|X_1>c, X_2> c\)=\p\(\lm X_1+(1-\lm)X_2>\frac{x}{c}+1\).$$
By Lemma \ref{le-3-1}, $S_2(\lambda)$ is also strictly increasing on $(0,1/2]$. Hence, the result for $n=2$ is shown. Next, by a similar proof to that of Theorem \ref{th:casen}, we obtain the desired result for $n>2$.
\end{proof}

\subsection{Bounded Pareto random variables}\label{sec:3.3}

Next, we consider random variables modelled by extremely heavy-tailed Pareto distributions with an upper bound. For $c\in(1,\infty)$, let $\bm Y=(X_1\wedge c,\dots,X_n\wedge c)$, where $X_1,\dots,X_n$ are iid Pareto$(\alpha)$ random variables with $\alpha\le 1$. For  $\boldsymbol{\theta},\boldsymbol{\eta}\in \R_+^n$ satisfying $\boldsymbol{\theta} \preceq \boldsymbol{\eta} $,  by Proposition \ref{prop:infinite}, $\boldsymbol{\theta} \cdot \bm Y \ge_{\rm st} \boldsymbol{\eta} \cdot \bm Y$ does not hold. However, the following proposition shows that the tail probability of $ \boldsymbol{\theta} \cdot \bm Y$ still dominates that of $ \boldsymbol{\eta} \cdot \bm Y$ if the upper bound is large enough.
 \begin{proposition}\label{prop:bounded}
Fix $\boldsymbol{\theta},\boldsymbol{\eta}\in \R_+^n$ such that $\boldsymbol{\theta} \preceq \boldsymbol{\eta}$ and $\boldsymbol{\eta}=(\eta_1,\dots,\eta_n)$ with $\eta_{(1)} := \min (\eta_1,\dots,\eta_n) > 0$. Let $b=\|\bm \eta\|/\eta_{(1)}$.  Let  $X_1,\dots,X_n$ be iid $\mathrm{Pareto}(\alpha) $ random variables with $\alpha\in(0,1]$ and $\bm Y=(X_1\wedge c,\dots,X_n\wedge c)$ with $c\in(b,\infty)$. Then
$\p(\boldsymbol{\theta} \cdot \bm Y>x)\ge\p(\boldsymbol{\eta} \cdot \bm Y>x)$ for $x\in(\|\boldsymbol{\eta}\|,(c/b)\|\boldsymbol{\eta}\|)$.
\end{proposition}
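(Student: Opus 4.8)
The plan is to reduce the truncated problem to the unbounded Theorem~\ref{th:casen} by showing that, on the range of $x$ in question, the truncation at $c$ is invisible to the relevant tail events. Two preliminary observations set this up. Since $\bm\theta\preceq\bm\eta$, the $k=1$ case of the majorization inequalities in Definition~\ref{def:equitable} gives $\theta_{(1)}\ge\eta_{(1)}>0$, so every component of $\bm\theta$ is strictly positive and bounded below by $\eta_{(1)}$. Moreover, unwinding $b=\|\bm\eta\|/\eta_{(1)}$ shows that the right endpoint of the interval is $(c/b)\|\bm\eta\|=c\,\eta_{(1)}$; in particular the hypothesis $c>b$ is exactly the condition that makes $(\|\bm\eta\|,c\,\eta_{(1)})$ nonempty.

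The heart of the argument is the claim that for every $x<c\,\eta_{(1)}$ the event $\{\bm\eta\cdot\bm Y>x\}$ coincides with $\{\bm\eta\cdot\bm X>x\}$, and likewise with $\bm\theta$ in place of $\bm\eta$. To prove this I would split on the truncation event $E=\{\max_{i}X_i>c\}$. On $E^c$ one has $\bm Y=\bm X$ componentwise, so both weighted sums agree and the events agree there. On $E$, pick any index $i_0$ with $X_{i_0}>c$; then $Y_{i_0}=c$, and because all summands are nonnegative, $\bm\eta\cdot\bm Y\ge\eta_{i_0}c\ge\eta_{(1)}c>x$ while $\bm\eta\cdot\bm X\ge\eta_{i_0}X_{i_0}>\eta_{i_0}c>x$. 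Hence $E$ lies inside both tail events, and they again coincide on $E$. The identical computation handles $\bm\theta$, using $\theta_{i_0}\ge\theta_{(1)}\ge\eta_{(1)}$ to get $\theta_{i_0}c>x$.

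Granting the claim, the conclusion follows by chaining the two event-identities with the iid case of Theorem~\ref{th:casen}: for $x$ in the stated range,
\begin{equation*}
   \p(\bm\theta\cdot\bm Y>x)=\p(\bm\theta\cdot\bm X>x)\ge\p(\bm\eta\cdot\bm X>x)=\p(\bm\eta\cdot\bm Y>x),
\end{equation*}
where the outer equalities are the claim and the middle inequality is Theorem~\ref{th:casen} applied to the iid $\mathrm{Pareto}(\alpha)$ vector $\bm X$ with $\bm\theta\preceq\bm\eta$.

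I do not expect a genuine obstacle here; the one point that needs care is making the single-exceedance threshold argument work for $\bm\theta$ as well as for $\bm\eta$. This is exactly where the order-statistic inequality $\theta_{(1)}\ge\eta_{(1)}$ is indispensable: without it, a lone component $X_{i_0}$ crossing $c$ might fail to push $\bm\theta\cdot\bm Y$ above $x$, and the reduction would collapse. The restriction $x<c\,\eta_{(1)}=(c/b)\|\bm\eta\|$ is precisely what guarantees that a single truncated term already exceeds $x$, which renders the truncation undetectable to the tail events; below $\|\bm\eta\|$ the statement is trivially an equality, since both sums are always at least $\|\bm\eta\|$.
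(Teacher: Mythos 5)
Your proof is correct and follows essentially the same route as the paper's: both arguments show that for $x<(c/b)\|\bm\eta\|=c\,\eta_{(1)}$ the truncation at $c$ is invisible to the tail events $\{\bm\eta\cdot\bm Y>x\}$ and $\{\bm\theta\cdot\bm Y>x\}$ (using $\theta_{(1)}\ge\eta_{(1)}$ for the $\bm\theta$ case), and then invoke Theorem \ref{th:casen} for the untruncated iid Pareto vector. The only cosmetic difference is that you phrase the reduction as an identity of events via the split on $E=\{\max_i X_i>c\}$, while the paper works with the complementary probabilities $\p(\bm\eta\cdot\bm Y\le x)=\p(\bm\eta\cdot\bm X\le x)$.
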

\begin{proof}
Let $\bm X =(X_1,\dots,X_n)$. If there is at least one $X_i > c$ with $i \in [n]$, then $\sum_{i=1}^n\eta_{i}(X_i\wedge c) \geq \sum_{j \neq i} \eta_{j} + \eta_i c \geq \eta_{(1)} c = (c/b)\|\boldsymbol{\eta}\|$. Thus, for $x\in(\|\boldsymbol{\eta}\|,(c/b)\|\boldsymbol{\eta}\|)$,we have
\begin{align*}
\p\(\bm{\eta}\cdot\bm{Y}\le  x\) &= \p\(\sum_{i=1}^n\eta_{i}(X_i\wedge c)\le  x\) \\
								 &= \p\(\sum_{i=1}^n\eta_{i} X_i \le  x, X_i \leq c, i \in [n]\) = \p\(\bm{\eta}\cdot\bm{X}\le  x\).
\end{align*}
Let $\boldsymbol{\theta}=(\theta_1,\dots,\theta_n)$. As $\boldsymbol{\theta} \preceq \boldsymbol{\eta},\ (c/b)\|\boldsymbol{\eta}\|\le c \min(\theta_1, \dots, \theta_n).$ Hence, we also have $\p\(\bm{\theta}\cdot\bm{Y}\le  x\)=\p\(\bm{\theta}\cdot\bm{X}\le  x\)$. By Theorem \ref{th:casen}, we have the desired result.
\end{proof} 
Proposition \ref{prop:bounded} implies that if the upper bound is large enough (e.g., the total wealth in an economy), a more diversified portfolio of extremely heavy-tailed Pareto random variables can dominate a less diversified one in the sense of tail probability in a large region.

\subsection{Positive dependence}\label{sec:3.4}

So far, all results require independence among the components of $\mathbf X$. 
\cite{CEW24} obtained results also under a notion of negative dependence. Our techniques for proving Theorem \ref{th:casen} do not generalize to negative dependence, but we are able to obtain some results under a specific form of positive dependence.
 See, e.g., \cite{PW15} for concepts of positive and negative dependence, including comonotonicity, which we will mention in passing below.

We   consider the case that Pareto random variables are positively dependent via a common shock.
Let $\mathrm{Gamma}(\alpha)$, $\alpha>0$, denote the distribution function of a gamma random variable with density $f(x)=x^{\alpha-1}\exp(-x)/\Gamma(\alpha)$ for $x>0$.
Note that a Pareto random variable $X\sim \mathrm{Pareto}(\alpha)$, $\alpha>0$, has the following stochastic representation 
$$X=\frac{Z_1+Z}{Z},$$ 
where $Z_1\sim \mathrm{Gamma}(1)$ and $Z\sim \mathrm{Gamma}(\alpha)$ are independent (see, e.g., Lemma 1 of \cite{SGPJ16}). 

Let $Z_1,\dots,Z_n\sim \mathrm{Gamma}(1)$ and $Z\sim \mathrm{Gamma}(\alpha)$, $\alpha>0$, be independent. Consider a multivariate Pareto distribution with parameter $\alpha>0$, denoted by $\mathrm{MP}(\alpha,n)$, of which the associated random vector has the stochastic representation below
\begin{equation}\label{eq:MP}
\(X_1,\dots,X_n\)=\(\frac{Z_1+Z}{Z},\dots,\frac{Z_n+Z}{Z}\).\end{equation} 
Clearly, $X_1,\dots,X_n$ are positively dependent as they are all affected by the common shock $Z$; if $\alpha>2$, $\rho(X_i,X_j)=1/\alpha$ for $i\neq j$ (see p.~155 of \cite{SGPJ16}). The distribution of $(X_1-1,\dots,X_n-1)$ is the Type II multivariate Pareto distribution; see \cite{A15}. 
The survival copula
\footnote{Joint distributions of uniform random variables over $(0,1)$ are known as copulas; see \cite{N06} for an introduction.}
associated with $(X_1,\dots,X_n)$ is the Clayton copula with parameter $1/\alpha$, given by 
$$C(u_1,\dots,u_n)=\(u_1^{-1/\alpha}+\dots+u_n^{-1/\alpha}-n+1\)^{-\alpha},~~(u_1,\dots,u_n)\in(0,1)^n.$$
The Clayton copula approaches   independence as $\alpha$ goes to  $\infty$ and it approaches   comonotonicity as $\alpha$ goes to $0$.

\begin{theorem}\label{thm:Clayton}
Let $\bm X \sim \mathrm{MP}(\alpha,n)$ with  $\alpha \in (0,1]$ and $n\in \N$. For $\bm\th, \bm\eta\in \R_+^n$ satisfying $\bm\th \preceq \bm\eta$, we have $\bm{\eta} \cdot\bm X \le_{{\rm st}} \bm{\theta} \cdot \bm X$.
\end{theorem}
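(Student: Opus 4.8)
The plan is to exploit the common-shock representation \eqref{eq:MP} to factor out the shared factor $Z$ and thereby reduce the claim to a majorization inequality for a \emph{concave} transform. Write $m:=\|\bm\eta\|=\|\bm\theta\|$ (the total exposures are equal because $\bm\theta\preceq\bm\eta$) and $W_{\bm a}:=\sum_{i=1}^n a_iZ_i$ for a weight vector $\bm a$, where $Z_1,\dots,Z_n\sim\mathrm{Gamma}(1)$ are iid and $Z\sim\mathrm{Gamma}(\alpha)$ is independent of them. From \eqref{eq:MP},
\begin{equation*}
  \bm\eta\cdot\bm X=\sum_{i=1}^n\eta_i\,\frac{Z_i+Z}{Z}=\frac{W_{\bm\eta}}{Z}+m,
\end{equation*}
and likewise $\bm\theta\cdot\bm X=W_{\bm\theta}/Z+m$. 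Since each $X_i\ge1$, both portfolios are at least $m$, so $\p(\bm\eta\cdot\bm X>t)=\p(\bm\theta\cdot\bm X>t)=1$ for $t\le m$, and only the tails for $t>m$ need comparison.

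For $t>m$ put $s:=t-m>0$. Because $W_{\bm\eta}$ is independent of $Z$ and $Z>0$,
\begin{equation*}
  \p(\bm\eta\cdot\bm X>t)=\p\(\frac{W_{\bm\eta}}{Z}>s\)=\p\(Z<\frac{W_{\bm\eta}}{s}\)=\E\[F_Z\(\frac{W_{\bm\eta}}{s}\)\],
\end{equation*}
where $F_Z$ is the cdf of $\mathrm{Gamma}(\alpha)$. The decisive point is that, exactly because $\alpha\le1$, the density $f_Z(z)=z^{\alpha-1}e^{-z}/\Gamma(\alpha)$ is decreasing on $(0,\infty)$, so $F_Z$ is concave; hence $\phi(w):=F_Z(w/s)$ is increasing and concave. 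It therefore suffices to prove $\E[\phi(W_{\bm\eta})]\le\E[\phi(W_{\bm\theta})]$.

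This remaining inequality is a standard majorization fact for iid summands. Since $Z_1,\dots,Z_n$ are iid with finite mean and $\bm\theta\preceq\bm\eta$, Theorem 3.A.35 of \cite{SS07} yields $W_{\bm\theta}\le_{\rm cx}W_{\bm\eta}$, which is equivalent to $\E[\psi(W_{\bm\theta})]\ge\E[\psi(W_{\bm\eta})]$ for every concave $\psi$. Applying this with $\psi=\phi$ gives $\E[\phi(W_{\bm\eta})]\le\E[\phi(W_{\bm\theta})]$, hence $\p(\bm\eta\cdot\bm X>t)\le\p(\bm\theta\cdot\bm X>t)$ for all $t$, i.e.\ $\bm\eta\cdot\bm X\le_{\rm st}\bm\theta\cdot\bm X$.

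I expect the main obstacle to be identifying the correct reduction rather than any hard estimate. Conditioning on $Z$ alone is useless here: given $Z=z$ the two portfolios have equal conditional means, so no conditional first-order dominance can exist, and the ordering materializes only after the \emph{common} factor is integrated out via the representation $\E[F_Z(W_{\bm\eta}/s)]$. The structural crux is that the infinite-mean regime $\alpha\le1$ is precisely what forces $F_Z$ to be concave; this is what turns the finite-mean second-order conclusion of Proposition \ref{prop:finite} (where $F_Z$ is not globally concave) into genuine first-order stochastic dominance. Once concavity of $F_Z$ is secured, the rest is the Schur-concavity of $\bm a\mapsto\E[\phi(\bm a\cdot\bm Z)]$ for concave $\phi$ and iid $\bm Z$, recorded as Theorem 3.A.35 of \cite{SS07}.
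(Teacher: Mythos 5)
Your proof is correct and follows essentially the same route as the paper's: the paper proves Theorem \ref{thm:Clayton} via the more general Theorem \ref{thm:MFP}, whose proof likewise rewrites the tail probability as $\E[\phi(\bm\eta\cdot\bm Z)]$ for a concave $\phi$ built from the distribution of the common shock $Z$ (concavity there is checked by a second-derivative computation under $\alpha\le\gamma$, which for $\gamma=1$ is exactly your decreasing-density observation) and then invokes Theorem 3.A.35 of \cite{SS07}. Your explicit handling of the location shift $m$ and the monotone-density argument for concavity of $F_Z$ are only cosmetic differences.
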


\begin{remark}[Feller-Pareto distribution]
Theorem 3 also holds for a class of multivariate Feller-Pareto distributions, which  includes the Type II multivariate Pareto distribution as a special case (see Theorem \ref{thm:MFP}). We provide a proof in the appendix for Theorem \ref{thm:MFP}  which implies Theorem \ref{thm:Clayton}. 
\end{remark}

\begin{remark}[Mixture dependence]
Let the components of $\bm X=(X_1,\dots,X_n)$ be identically distributed Pareto random variables with tail parameter $\alpha\le 1$. 
It is clear that the inequality $\bm\th\cdot \bm X \ge_{\rm st} \bm\eta\cdot\bm X$ holds if $X_1,\dots,X_n$ are comonotonic (i.e., perfectly positively dependent). Therefore, by Theorems \ref{th:casen} and \ref{thm:Clayton}, $\bm\th\cdot \bm X \ge_{\rm st} \bm\eta\cdot\bm X$ also holds if the dependence structure (i.e., copula) of $\bm X$ is a mixture of independence, comonotonicity, and the Clayton copula with parameter $1/\alpha$.
\end{remark}

 \section{Investment implications}\label{sec:4}
Suppose that an agent needs to decide on several investments whose profits are iid Pareto random variables. Let $\X$ be the set of all random variables.  The agent is equipped with a \emph{preference function} $\rho: \mathcal X_\rho\rightarrow \overline \R:= [-\infty,\infty]$ where $\X_\rho\subseteq \X$ is a set of random variables representing profits. The agent prefers $Y$ over $X$ if and only if $\rho(X) \le \rho(Y)$.  For the rest of our discussions, we make minimal assumptions of monotonicity on $\rho $ in the following two forms.
\begin{enumerate}[(a)]
 \item Weak monotonicity: $\rho(X) \le \rho(Y)$ for $X,Y\in\mathcal X_\rho$ if $X\le_{\rm st} Y$.
 \item Mild monotonicity: $\rho$ is weakly monotone and $\rho(X)< \rho(Y)$ if $\p(X<Y)=1$.
\end{enumerate}

The two assumptions are satisfied by all commonly used preference models. As a classic example, the expected utility model in decision analysis, denoted by $E_u$, is defined as
$$
E_u(X) = \E[u(X)],~~~~~~X\in \X_{E_u}:=\{Y\in \X: \E[u(Y)]\mbox{~is well-defined}\},
$$
 where the utility function $u$ is  measurable,  and typically assumed to be increasing.
 Here, for $\E[u(Y)]$ to be well-defined, it can be infinite. 
 Clearly, $E_u$ is weakly monotone and is mildly monotone if $u$ is strictly increasing.
An expected utility agent is \emph{risk averse} if  $u$ is concave, in the sense of \cite{RS70}.  The following result shows that risk-averse expected utility agents always prefer a more diversified portfolio of iid Pareto profits, regardless of the tail parameter $\alpha$.
\begin{proposition}
    Let $\bm X$ be a vector of $n$ iid Pareto random variables and $u$ be an increasing and   concave function. 
    For $\bm\th, \bm\eta\in \R_+^n$ satisfying $\bm\th \preceq \bm\eta$, we have $E_u(\bm{\theta} \cdot \bm X) \ge E_u(\bm{\eta} \cdot\bm X)$. 
\end{proposition}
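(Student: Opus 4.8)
The plan is to dispatch the claim by splitting on the tail parameter $\alpha$ of the common Pareto marginal, since the two regimes $\alpha\in(0,1]$ and $\alpha>1$ invoke different stochastic orders that have already been established in the paper. In both regimes the conclusion $E_u(\bm\theta\cdot\bm X)\ge E_u(\bm\eta\cdot\bm X)$ will follow by feeding the relevant dominance relation into the monotonicity (and, in one case, concavity) of $u$, so essentially no new computation is required.

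First I would treat the extremely heavy-tailed case $\alpha\in(0,1]$. Here Theorem \ref{th:casen}, specialized to identical tail indices, directly gives $\bm\theta\cdot\bm X \ge_{\rm st}\bm\eta\cdot\bm X$. Since the functional $\E[u(\cdot)]$ is monotone with respect to $\le_{\rm st}$ for any increasing measurable $u$, the increasing function $u$ yields $E_u(\bm\theta\cdot\bm X)\ge E_u(\bm\eta\cdot\bm X)$. Notably, concavity of $u$ is not used in this regime: the stronger first-order dominance already delivers the inequality for all increasing $u$.

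Next I would treat the finite-mean case $\alpha>1$. Here Proposition \ref{prop:finite} gives the second-order dominance $\bm\eta\cdot\bm X \le_{\rm ssd}\bm\theta\cdot\bm X$. By the very definition of $\le_{\rm ssd}$ adopted in the paper, we have $\E[v(\bm\eta\cdot\bm X)]\le\E[v(\bm\theta\cdot\bm X)]$ for every increasing concave $v$ whose expectations exist; applying this with $v=u$ gives exactly $E_u(\bm\theta\cdot\bm X)\ge E_u(\bm\eta\cdot\bm X)$. This is where concavity is indispensable, since in the finite-mean regime first-order dominance must fail by Proposition \ref{prop:infinite}.

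The one point requiring a moment's care---and the closest thing to an obstacle---is the well-definedness of $E_u$ on both portfolios, so that the comparison of (possibly infinite) expectations is meaningful. Because each Pareto variable is at least $1$ and $\|\bm\theta\|=\|\bm\eta\|$ (forced by $\bm\theta\preceq\bm\eta$), both inner products are bounded below by the same constant $\|\bm\theta\|$; as $u$ is increasing, $u(\bm\theta\cdot\bm X)$ and $u(\bm\eta\cdot\bm X)$ are bounded below by $u(\|\bm\theta\|)$, so their negative parts are integrable and both expectations exist in $(-\infty,+\infty]$. The comparisons above are therefore legitimate, allowing the common value $+\infty$ (which genuinely occurs when $\alpha\le 1$ and $u$ grows, e.g.\ $u=\mathrm{id}$, in which case both sides equal $+\infty$ and the inequality holds trivially).
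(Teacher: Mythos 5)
Your proposal is correct and follows essentially the same route as the paper's own proof: split on the tail parameter, invoke Theorem \ref{th:casen} (first-order dominance plus weak monotonicity of $E_u$) when $\alpha\le 1$, and Proposition \ref{prop:finite} (second-order dominance plus increasingness and concavity of $u$) when $\alpha>1$. Your extra remark on well-definedness of the expectations via the uniform lower bound $u(\|\bm\theta\|)$ is a slightly more explicit version of the paper's observation that nonnegativity of $\bm X$ makes $E_u$ well-defined.
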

\begin{proof}
Let $\alpha$ be the tail parameter of the iid Pareto random variables.
 As $\mathbf X$ has nonnegative components,  $E_u(\bm{\theta} \cdot \bm X)$ and $E_u(\bm{\eta} \cdot\bm X)$ are always well-defined.  If $\alpha\le 1$, the result directly follows from Theorem \ref{th:casen} as $E_u$ is weakly monotone. If $\alpha>1$, we have $\bm \eta\cdot \bm X\le_{\rm ssd}\bm \theta\cdot \bm X$ by Proposition \ref{prop:finite}. Through the definition of second-order stochastic dominance, as $u$ is increasing and concave, we have the desired result.
\end{proof}

For an expected utility decision maker,  we can observe a sharp contrast between finite-mean and infinite-mean models. 
Consider a decision maker with preference function $\rho$.
For a random vector $\mathbf X$, we say that \emph{pro-diversification} holds 
if $\rho(\bm{\theta} \cdot \bm X) \ge \rho(\bm{\eta} \cdot\bm X)$
 for all $\bm\th, \bm\eta\in \R_+^n$ satisfying $\bm\th \preceq \bm\eta$. Intuitively, 
 this means that the decision maker likes more diversification. 
\begin{theorem}\label{th:RW}
    Consider an expected utility  decision maker with a utility function $u$.
    \begin{enumerate}[(i)]
    \item Pro-diversification holds for all iid $\mathbf X$ with finite mean if and only if $u$ is concave.  
    \item Pro-diversification holds for all iid $\mathbf X$ with two-point marginals if and only if $u$ is concave. 
    \item Pro-diversification holds for all iid $\mathbf X$ with    $\mathrm{Pareto}(\alpha)$ marginals for $\alpha\in (0,1] $ if $u$ is increasing. 
    \end{enumerate}
\end{theorem}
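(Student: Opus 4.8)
The plan is to recast pro-diversification as a stochastic-order comparison between $\bm\theta\cdot\bm X$ and $\bm\eta\cdot\bm X$ and then read it off through the monotonicity of $E_u$. Two inputs drive everything. For finite-mean iid marginals, the classical majorization fact that $\bm\theta\preceq\bm\eta$ forces the convex-order relation $\bm\theta\cdot\bm X\le_{\mathrm{cx}}\bm\eta\cdot\bm X$ (exchangeability together with $\|\bm\theta\|=\|\bm\eta\|$ makes the means agree; see, e.g., \cite{MOA11} or Theorem 3.A.35 of \cite{SS07}). For $\mathrm{Pareto}(\alpha)$ marginals with $\alpha\le 1$, Theorem \ref{th:casen} upgrades this to the far stronger first-order statement $\bm\theta\cdot\bm X\ge_{\mathrm{st}}\bm\eta\cdot\bm X$. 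I would organize (i) and (ii) as a single equivalence chain among the statements (A) pro-diversification for all finite-mean iid $\bm X$, (B) pro-diversification for all two-point iid $\bm X$, and (C) $u$ is concave, proving $\mathrm{(C)}\Rightarrow\mathrm{(A)}\Rightarrow\mathrm{(B)}\Rightarrow\mathrm{(C)}$; this settles both parts at once, while (iii) is a separate one-directional argument.

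For $\mathrm{(C)}\Rightarrow\mathrm{(A)}$: if $u$ is concave then $-u$ is convex, so applying the definition of $\le_{\mathrm{cx}}$ to $-u$ gives $\E[-u(\bm\theta\cdot\bm X)]\le\E[-u(\bm\eta\cdot\bm X)]$, i.e. $E_u(\bm\theta\cdot\bm X)\ge E_u(\bm\eta\cdot\bm X)$, which is pro-diversification. The implication $\mathrm{(A)}\Rightarrow\mathrm{(B)}$ is a trivial restriction, since two-point marginals are bounded and hence finite-mean. For part (iii): assuming $u$ increasing makes $E_u$ weakly monotone, and since each portfolio is bounded below by $\|\bm\theta\|=\|\bm\eta\|\ge 0$ (the $X_i$ are supported on $[1,\infty)$), $E_u$ is well-defined, possibly equal to $+\infty$; feeding Theorem \ref{th:casen} into weak monotonicity yields $E_u(\bm\theta\cdot\bm X)\ge E_u(\bm\eta\cdot\bm X)$.

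The substance lies in $\mathrm{(B)}\Rightarrow\mathrm{(C)}$, which I would settle with a single two-point example. Take $n=2$, $\bm\theta=(1/2,1/2)\preceq(1,0)=\bm\eta$, and iid $X_1,X_2$ with $\P(X_i=a)=\P(X_i=b)=1/2$ for arbitrary reals $a<b$. Then $E_u(\bm\eta\cdot\bm X)=\tfrac12 u(a)+\tfrac12 u(b)$, while $\bm\theta\cdot\bm X=(X_1+X_2)/2$ takes the values $a,(a+b)/2,b$ with probabilities $1/4,1/2,1/4$, so pro-diversification $E_u(\bm\theta\cdot\bm X)\ge E_u(\bm\eta\cdot\bm X)$ reduces exactly to $u\!\left((a+b)/2\right)\ge\tfrac12 u(a)+\tfrac12 u(b)$, i.e. midpoint concavity of $u$ at every pair $a,b$. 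Since $u$ is assumed measurable, midpoint concavity upgrades to concavity, giving (C). As this example uses only two-point marginals, it proves the ``only if'' of (ii) directly and, a fortiori, that of (i).

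I expect the main obstacle to be careful bookkeeping rather than genuine depth: one must track the sign flip in passing from the convex order to $E_u$ through $-u$, confirm that $E_u$ is well-defined (and possibly $+\infty$) in the infinite-mean regime of (iii), and correctly invoke the regularity step that a measurable midpoint-concave function is concave. It is worth stressing why (iii) is only a one-way implication and not an equivalence: the first-order dominance of Theorem \ref{th:casen} makes diversification beneficial for \emph{every} increasing $u$, so no curvature of $u$ — in particular, no concavity — can be extracted, in sharp contrast to the finite-mean parts (i) and (ii). This is precisely the paper's point that the diversification benefit for extremely heavy-tailed Pareto profits is independent of risk aversion.
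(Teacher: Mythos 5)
Your proposal is correct and follows essentially the same route as the paper: the "if" directions come from Theorem 3.A.35 of \cite{SS07} (convex order under majorization) for finite-mean marginals and from Theorem \ref{th:casen} for the Pareto case, while the "only if" directions reduce to the two-point example $E_u((X_1+X_2)/2)\ge E_u(X_1)$, giving midpoint concavity, which upgrades to concavity by measurability (Sierpi\'nski's theorem, as the paper cites). Your explicit equivalence chain $(\mathrm{C})\Rightarrow(\mathrm{A})\Rightarrow(\mathrm{B})\Rightarrow(\mathrm{C})$ is just a cleaner packaging of the paper's observation that the remaining two directions follow from the two proved ones.
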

\begin{proof}
 If $u$ is concave, then by  Theorem 3.A.35 of \cite{SS07}, 
$E_u(\bm{\theta} \cdot \bm X) \ge E_u(\bm{\eta} \cdot\bm X)$ if $\bm\th \preceq \bm\eta$ for  all iid $\mathbf X$ with finite mean, and thus pro-diversification holds.
This shows the ``if" direction of (i).
Suppose that pro-diversification holds for $(X_1,X_2)$, where each of $X_1$ and $X_2$ takes values $a,b\in \R$ with $1/2$ probability.
We have $ E_u(X_1/2+X_2/2) \ge  E_u(X_1)$, yielding $u((a+b)/2)\ge u(a)/2+ u(b)/2$, which implies concavity by  Sierpi\' nski's Theorem on mid-point concavity.
This shows the ``only if" direction of (ii). The ``only if" direction of (i) and the ``if" direction of (ii) are implied by the above two shown directions. 
Statement (iii) follows directly from Theorem \ref{th:casen}.
\end{proof}
Theorem \ref{th:RW} clarifies the fact that pro-diversification for finite-mean models is a result of risk aversion, but 
for infinite-mean Pareto models it is not related to risk aversion. As far as we are aware, this observation is new to the literature. Infinite-mean Pareto distributions can be used to model profits from technological innovations (\cite{scherer2001uncertainty} and \cite{silverberg2007size}).\footnote{Infinite-mean models are not uncommon in modelling profits/returns. For instance, the log-returns of stocks and other financial assets are often modelled by heavy-tailed distributions. This implies that the return distributions must have infinite mean, although in this case,  they do not necessarily have Pareto tails.}
Our result can explain the widely observed diversified investment in many start-up firms by venture capital firms, which are, intuitively, not necessarily risk averse.

Other useful preference models are risk measures in the sense of \cite{ADEH99} and \cite{FS16}. Almost all distortion risk measures are mildly monotone; see Proposition A.1 of \cite{CEW24(2)} for the precise statement. 

The rest of this section focuses on extremely heavy-tailed Pareto random variables. Note that some preference models may give $\infty$ value in the presence of extremely heavy tailedness and are thus not useful in this case; \cite{CEW24(2)} listed some useful decision models in evaluating infinite-mean random variables.

A mapping $f$ on $\R^n$ is \emph{Schur-concave} (resp.~\emph{Schur-convex}) if $f(\bm\th)\ge f(\bm\eta) $ (resp.~$f(\bm\th)\le f(\bm\eta) $) for $\bm\th\preceq  \bm\eta$.
We speak of strict Schur-concavity by replacing  both inequalities above as strict inequalities. From now on, $\X_\rho$ is assumed to contain the convex cone of Pareto$(\alpha)$ random variables with $\alpha\le 1$. As implied by Theorem \ref{th:casen}, a more diversified portfolio of iid extremely heavy-tailed Pareto random variables is more likely to be large. The following result directly follows from Theorem \ref{th:casen}.

\begin{proposition}\label{cor:riskmeasure}
Let $\bm X$ be a vector of $n$ iid $\mathrm{Pareto}(\alpha) $ random variables with $\alpha\in(0,1]$ and $\rho:\X_\rho\to \R$. The mapping
$\bm\th\mapsto \rho(\bm\th\cdot \bm X)$ on $\R^n_+$ is Schur-concave if $\rho$ is weakly monotone, and it is strictly Schur-concave if $\rho$ is mildly monotone.
 \end{proposition}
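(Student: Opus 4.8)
The plan is to deduce both assertions essentially directly from Theorem \ref{th:casen}, translating the two monotonicity hypotheses on $\rho$ into the two forms of Schur-concavity. For the weakly monotone case, I would fix $\bm\th\preceq\bm\eta$ in $\R_+^n$ and apply Theorem \ref{th:casen} to obtain $\bm\eta\cdot\bm X\le_{\rm st}\bm\th\cdot\bm X$. Weak monotonicity of $\rho$ then gives $\rho(\bm\eta\cdot\bm X)\le\rho(\bm\th\cdot\bm X)$, which is exactly the defining inequality for $\bm\th\mapsto\rho(\bm\th\cdot\bm X)$ to be Schur-concave. Note that weak monotonicity already forces $\rho$ to be law-invariant: if $X\overset{\rm d}{=}Y$ then $X\le_{\rm st}Y$ and $Y\le_{\rm st}X$, so $\rho(X)=\rho(Y)$; this is used implicitly below.

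For the mildly monotone case I would fix $\bm\th\prec\bm\eta$ and invoke the strictness statement of Theorem \ref{th:casen} to get $\bm\eta\cdot\bm X<_{\rm st}\bm\th\cdot\bm X$. The obstacle is that mild monotonicity is phrased through the almost-sure comparison $\p(X<Y)=1$, whereas $<_{\rm st}$ is a statement about tail probabilities; so I must construct a coupling of $\bm\eta\cdot\bm X$ and $\bm\th\cdot\bm X$ on which the former lies strictly below the latter almost surely. Write $F$ and $G$ for the distribution functions of $\bm\eta\cdot\bm X$ and $\bm\th\cdot\bm X$. Since each coordinate of $\bm X$ is a $\mathrm{Pareto}(\alpha)$ variable bounded below by $1$ and the weights are nonnegative, both $\bm\eta\cdot\bm X$ and $\bm\th\cdot\bm X$ admit densities and share the lower support endpoint $\|\bm\eta\|=\|\bm\th\|$ (majorization preserves the total weight); hence $F$ and $G$ are continuous and strictly increasing on $[\|\bm\th\|,\infty)$.

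The key step is to upgrade the tail-probability inequality to a quantile inequality. By the definition of $<_{\rm st}$, one has $F(x)>G(x)$ for every $x$ with $F(x)\in(0,1)$, i.e.\ throughout the interior of the common support. Fixing $u\in(0,1)$ and putting $x=G^{-1}(u)$, so that $G(x)=u$, the relation $F(x)>u$ together with the continuity and strict monotonicity of $F$ gives $F^{-1}(u)<x=G^{-1}(u)$; thus $F^{-1}(u)<G^{-1}(u)$ for all $u\in(0,1)$. As the probability space is atomless, I can take $U\sim\mathrm{U}(0,1)$ and set $X'=F^{-1}(U)$ and $Y'=G^{-1}(U)$, whence $X'\overset{\rm d}{=}\bm\eta\cdot\bm X$, $Y'\overset{\rm d}{=}\bm\th\cdot\bm X$, and $\p(X'<Y')=1$. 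Mild monotonicity yields $\rho(X')<\rho(Y')$, and law-invariance of $\rho$ turns this into $\rho(\bm\eta\cdot\bm X)<\rho(\bm\th\cdot\bm X)$, i.e.\ strict Schur-concavity. The main difficulty I anticipate is precisely this passage from strict stochastic dominance to an almost-sure strict coupling, which is the only place the argument uses more than a one-line appeal to Theorem \ref{th:casen}; it relies essentially on the continuity and strict monotonicity of the marginal distribution functions guaranteed by the Pareto structure.
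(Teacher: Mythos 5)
Your proposal is correct, and it is in fact more complete than what the paper provides: the paper states this proposition with no proof at all, remarking only that it ``directly follows from Theorem \ref{th:casen}.'' Your first half is exactly that one-line deduction (weak monotonicity plus $\bm\eta\cdot\bm X\le_{\rm st}\bm\th\cdot\bm X$). Your second half supplies the step the paper leaves implicit, and it is the genuinely nontrivial one: mild monotonicity is phrased through the almost-sure comparison $\p(X<Y)=1$, while Theorem \ref{th:casen} delivers only $<_{\rm st}$, so one really does need the comonotonic coupling $X'=F^{-1}(U)$, $Y'=G^{-1}(U)$ together with the strict quantile inequality $F^{-1}(u)<G^{-1}(u)$ and the law-invariance of $\rho$ (which you correctly extract from weak monotonicity). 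Your verification of the quantile inequality is sound: both portfolio sums have common essential infimum $\|\bm\th\|=\|\bm\eta\|$ and continuous, strictly increasing distribution functions on $[\|\bm\th\|,\infty)$, so $F(x)>G(x)=u$ on the interior of the support forces $F^{-1}(u)<G^{-1}(u)$ (the only case you skip, $F(x)=1$, is trivially fine). The single loose end, which the paper shares, is that $X'$ and $Y'$ are only equal in distribution to elements of the cone the paper assumes lies in $\X_\rho$, so one implicitly needs $\X_\rho$ to be closed under equality in distribution; this is harmless given that the whole framework presumes law-invariant preferences, but it is worth a parenthetical remark.
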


For $i \in[n]$, let $\bm e_{i,n}$ be the $i$th column vector of the $n\times n$ identity matrix. The above result says that, for any weakly monotone risk measure $\rho$, and $\bm \th, \bm\eta\in \R_+^n$ such that $\bm\th\preceq \bm\eta$ and $\|\bm\th\|=w$,
$$
   \rho(w\bm e_{i,n}\cdot\bm X)\le \rho\(\bm\eta\cdot\bm X\)
      \le\rho\(\bm\th\cdot \bm X\) \le\rho\(\frac{w}{n} \bm 1_n\cdot \bm X\),\quad i\in[n].
$$
The above inequalities are strict if $\rho$ is mildly monotone and $w\bm e_{i,n}\prec \bm\th\prec \bm\eta\prec (w/n)\bm 1_n$.
We emphasize the importance of the strict inequalities to decision makers below.

Suppose that an agent wants to allocate their exposures onto $\bm X$ to maximize their preference. The agent faces a total position $\bm w\cdot \bm X -g(\bm w)$ where the function $g$ represents a penalty and it is Schur-convex. The agent has one of the following optimization problems:
\begin{enumerate}[(p1)]
 \item maximize\ \ $\rho\(\bm w\cdot \bm X -g(\bm w)\)$ \ \ subject to $\bm w\in\mathbb R_+^n$ and $\|\bm w\|=w$ with given $w>0$,

 \item maximize\ \ $\rho\(\bm w\cdot \bm X -g(\bm w) \)$ \ \ subject to $\bm w\in\mathbb R_+^n$.
\end{enumerate}
The following proposition follows from the strictness statement of Theorem \ref{th:casen}.

\begin{proposition}\label{prop:concentrate}
Let $\rho:\mathcal X_\rho\rightarrow  \R$ be  mildly monotone and $g:\R\to\R$. Then the maximizer to {\rm (p1)} is $w\bm {1}_n/n$, and the set of maximizers to {\rm (p2)} is in $\left\{w\bm {1}_n/n:w\in\R_+\right\}$.
\end{proposition}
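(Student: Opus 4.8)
The plan is to reduce the whole statement to the strict Schur-concavity already recorded in Proposition \ref{cor:riskmeasure}, treating the deterministic penalty separately from the random payoff. The key observation is that for every feasible $\bm w\in\R_+^n$ with $\|\bm w\|=w$ one has $w\bm 1_n/n\preceq\bm w$, so the equal-weight vector is the majorization-minimal feasible point of (p1); I would show it \emph{strictly} beats every other feasible point, which then forces every maximizer of (p2) to be of equal-weight form.

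For (p1), fix a feasible $\bm w$ with $\|\bm w\|=w$ and $\bm w\neq w\bm 1_n/n$, so that $w\bm 1_n/n\prec\bm w$. I would compare the positions $\bm w\cdot\bm X-g(\bm w)$ and $(w/n)\bm 1_n\cdot\bm X-g(w\bm 1_n/n)$ in two steps. First, since $g$ is Schur-convex, $g(w\bm 1_n/n)\le g(\bm w)$; writing $c_0=g(w\bm 1_n/n)$, the bound $\bm w\cdot\bm X-g(\bm w)\le\bm w\cdot\bm X-c_0$ holds pointwise, so weak monotonicity gives $\rho(\bm w\cdot\bm X-g(\bm w))\le\rho(\bm w\cdot\bm X-c_0)$. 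Second, I would introduce the shifted preference $\rho_{c_0}(Y):=\rho(Y-c_0)$ and check that it inherits mild monotonicity from $\rho$, since a constant shift preserves both $\le_{\rm st}$ and the almost-sure strict order. Proposition \ref{cor:riskmeasure} applied to $\rho_{c_0}$ then says $\bm\th\mapsto\rho(\bm\th\cdot\bm X-c_0)$ is strictly Schur-concave, and because $w\bm 1_n/n\prec\bm w$ this yields $\rho(\bm w\cdot\bm X-c_0)<\rho((w/n)\bm 1_n\cdot\bm X-c_0)$. Chaining the two steps produces the single strict inequality $\rho(\bm w\cdot\bm X-g(\bm w))<\rho((w/n)\bm 1_n\cdot\bm X-g(w\bm 1_n/n))$, identifying $w\bm 1_n/n$ as the unique maximizer of (p1).

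For (p2), I would take any maximizer $\bm w^\ast$ and set $w^\ast=\|\bm w^\ast\|$. The vector $(w^\ast/n)\bm 1_n$ is feasible for (p2) and has the same norm, so if $\bm w^\ast\neq(w^\ast/n)\bm 1_n$ the (p1) argument gives it a strictly larger objective value, contradicting optimality; hence $\bm w^\ast\in\{w\bm 1_n/n:w\in\R_+\}$. I expect the main obstacle to be the bookkeeping around strictness and domains: I must ensure the constant-shifted variables $\bm\th\cdot\bm X-c_0$ lie in $\X_\rho$ so that Proposition \ref{cor:riskmeasure} legitimately applies to $\rho_{c_0}$, and I must keep the non-strict penalty step (weak monotonicity) and the strict payoff step (strict Schur-concavity) on the correct sides so that the strict inequality is preserved rather than diluted into an equality.
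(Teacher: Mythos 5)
Your argument is correct and is exactly the route the paper intends: the paper offers no written-out proof beyond the remark that the result ``follows from the strictness statement of Theorem \ref{th:casen}'' (via the strict Schur-concavity in Proposition \ref{cor:riskmeasure}), and your decomposition---Schur-convexity of $g$ handled by weak monotonicity, then strict Schur-concavity of $\bm\th\mapsto\rho(\bm\th\cdot\bm X-c_0)$ applied to $w\bm 1_n/n\prec\bm w$, then reduction of (p2) to (p1) along the sphere $\|\bm w\|=w^\ast$---is the natural filling-in of that sketch. The domain caveat you raise is real but harmless, since feasibility of (p1)--(p2) already presupposes the shifted positions lie in $\X_\rho$.
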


\section{Conclusion}\label{sec:5}

Extending the recently obtained 
 stochastic dominance relation  \begin{equation}
 \label{eq:conc1}
      \theta X_{1}\le_{\rm st}\theta_{1}X_{1}+\dots+\theta_{n}X_{n} \mbox{~~~~for any $(\theta_1,\dots,\theta_n)\in \R_+^n$, where  $\theta =\sum_{i=1}^n\theta_n $}
\end{equation}
for iid infinite-mean Pareto random variables, this paper focuses on the stronger relation 
\begin{align}
 \label{eq:conc2}
      \eta_1 X_1+\dots+\eta_n X_n \le_{\rm st} \theta_1 X_1 +\dots+\theta_n X_n \mbox{~~~~for any $(\theta_1,\dots,\theta_n)  \preceq (\eta_1,\dots,\eta_n)\in \R_+^n$}.
\end{align}
In Theorem \ref{th:casen}, 
we show that \eqref{eq:conc2} holds for iid infinite-mean Pareto random variables. This result is generalized to several cases:  (i) Pareto random variables are caused by triggering events (Theorem \ref{th:cata}); (ii)  tail region of random variables are Pareto (Proposition \ref{prop:tail}); (iii) Pareto random variables are bounded (Proposition \ref{prop:bounded}); (iv) Pareto random variables are positively dependent (Theorem \ref{thm:Clayton}). The implications of Theorem \ref{th:casen} for monotone agents making investment decisions are discussed in Section \ref{sec:4}, showing that in our setting, preference for portfolio diversification does not necessarily require the decision maker to be risk averse, in sharp contrast to the case of finite-mean models.

Although our main result is generalized to several classes of models, all techniques rely heavily on the specific power form of the Pareto distribution functions, and they do not seem to easily generalize to other infinite-mean distributions. 

After the current paper was posted online,    some more recent developments in \cite{CS24}, \cite{M24}, and \cite{ALO24} showed \eqref{eq:conc1} for more general classes of infinite-mean  distributions than the  Pareto distributions. 
A natural question is whether \eqref{eq:conc2} also holds for these  distributions. 
An answer remains unclear to us because of the significantly different techniques employed in these papers.
In particular, we were not able to find distributions for which \eqref{eq:conc1} holds under the iid assumption but  the stronger \eqref{eq:conc2} fails to hold. Inspired by this observation, a simple conjecture is formally stated below. 
\begin{conjecture}
For iid $X_1,\dots,X_n$,
    \eqref{eq:conc1} and \eqref{eq:conc2} are equivalent. 
\end{conjecture}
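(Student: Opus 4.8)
The plan is to record first that the conjecture is an equivalence of which only one direction carries content. That \eqref{eq:conc2} implies \eqref{eq:conc1} is immediate: for any $\bm\theta\in\R_+^n$ with $\|\bm\theta\|=s$ one has $\bm\theta\preceq(s,0,\dots,0)$, so taking $\bm\eta=(s,0,\dots,0)$ in \eqref{eq:conc2} yields $sX_1\le_{\rm st}\bm\theta\cdot\bm X$, which is exactly \eqref{eq:conc1}. Hence the whole problem is to establish the converse, \eqref{eq:conc1}$\Rightarrow$\eqref{eq:conc2}, under the standing iid assumption.

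Second, I would reduce this converse to a single two-variable statement, exactly along the lines of the proof of Theorem \ref{th:casen}. Given $\bm\theta\preceq\bm\eta$, Lemma \ref{lem:maj} writes the passage from $\bm\eta$ to $\bm\theta$ as a finite chain of $T$-transforms, each altering only two coordinates $i,j$ with $\theta_i+\theta_j=\eta_i+\eta_j$ while leaving the common independent remainder $\sum_{k\neq i,j}\theta_kX_k=\sum_{k\neq i,j}\eta_kX_k$ untouched; since $\le_{\rm st}$ is closed under convolution (Theorem 1.A.3(b) of \cite{SS07}), it suffices to compare the two modified coordinates. Because the $X_i$ are iid, this is the following \emph{core claim}: if $X_1,X_2$ are iid and $X_1\le_{\rm st}\lambda X_1+(1-\lambda)X_2$ for every $\lambda\in[0,1]$ --- which is precisely the $n=2$ instance of \eqref{eq:conc1}, recovered by setting all but two weights to zero --- then $\eta X_1+(1-\eta)X_2\le_{\rm st}\theta X_1+(1-\theta)X_2$ for all $0\le\eta\le\theta\le1/2$. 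Thus the entire conjecture, at every $n$, collapses to this one-parameter monotonicity statement; the reduction is itself informative, since it shows the $n=2$ case of \eqref{eq:conc1} would already force \eqref{eq:conc2} in all dimensions.

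Third, for the core claim I would pursue a characterization-based route rather than a direct one. The hypothesis is only a pointwise lower bound, $\p(\lambda X_1+(1-\lambda)X_2>t)\ge\p(X_1>t)$, attained at the endpoints $\lambda\in\{0,1\}$, and such a bound does not by itself yield monotonicity of $\lambda\mapsto\p(\lambda X_1+(1-\lambda)X_2>t)$ on $[0,1/2]$; extra structure on the common law $F$ is needed. The most promising plan is to take a characterization of the laws satisfying \eqref{eq:conc1} (as developed in \cite{CS24}, \cite{M24}, \cite{ALO24}) and show it forces the contraction-toward-the-mean step to increase tail probabilities, adapting the explicit computation of Lemma \ref{le-3-1} --- which is nothing but this monotonicity for the power tail --- to the general class. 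The main obstacle is that those characterizations rest on techniques quite different from the majorization and $T$-transform machinery, and there is no single criterion known to be simultaneously necessary and sufficient for \eqref{eq:conc1} and tractable under two-variable mixing; bridging that gap is the crux. A realistic intermediate goal is to verify the core claim separately for each known sufficient condition for \eqref{eq:conc1}, while using the fact that distributions failing \eqref{eq:conc1} (such as the St.~Petersburg lottery, which violates even the $n=3$ case as shown in the example following Proposition \ref{prop:SP}) lie outside the conjecture's scope to guide the search for the right hypothesis.
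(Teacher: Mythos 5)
The statement you are addressing is an open conjecture: the paper gives no proof of it and explicitly states that an answer remains unclear to the authors. Your proposal does not close it either, so there is no proof to compare against and the relevant question is whether your argument would constitute one. The parts you actually establish are correct but easy. The implication \eqref{eq:conc2} $\Rightarrow$ \eqref{eq:conc1} via $\bm\theta\preceq(s,0,\dots,0)$ is right. Your reduction of the converse is also sound: by Lemma \ref{lem:maj}, closure of $\le_{\rm st}$ under convolution, and the iid assumption, \eqref{eq:conc1} $\Rightarrow$ \eqref{eq:conc2} for every $n$ follows from the single bivariate claim that, for iid $X_1,X_2$, the hypothesis $X_1\le_{\rm st}\lambda X_1+(1-\lambda)X_2$ for all $\lambda\in[0,1]$ forces $\lambda\mapsto\P(\lambda X_1+(1-\lambda)X_2>t)$ to be increasing on $[0,1/2]$. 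This mirrors exactly how Theorem \ref{th:casen} is deduced from Lemma \ref{le-3-1}, and the observation that the whole conjecture collapses to its $n=2$ instance is a genuinely useful structural remark.

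The gap is that this bivariate claim \emph{is} the conjecture; everything else is bookkeeping. Your hypothesis only compares the interior of $[0,1]$ with the endpoints $\lambda\in\{0,1\}$, and, as you concede, a pointwise lower bound at the endpoints does not yield monotonicity on $[0,1/2]$ without additional structure on the common law $F$. Your third step does not supply that structure: it proposes to invoke ``a characterization'' of the distributions satisfying \eqref{eq:conc1} and to verify the claim for each known sufficient condition, but no necessary-and-sufficient characterization is available (the paper cites \cite{CS24}, \cite{M24}, \cite{ALO24} precisely as partial sufficient conditions obtained by disparate techniques), and verifying the claim under a sufficient condition for \eqref{eq:conc1} would prove \eqref{eq:conc2} for that subclass, not the conjectured equivalence. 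So the proposal should be read as a correct trivial direction plus a correct reduction plus a research plan; as a proof of the conjecture it fails at the core claim, which remains open.
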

Studying the above conjecture, extending  \eqref{eq:conc1} or \eqref{eq:conc2} beyond the known cases,  and fully characterizing distributions that satisfy  \eqref{eq:conc1} or \eqref{eq:conc2} require future research beyond the existing techniques.

\appendix

\setcounter{table}{0}
\setcounter{figure}{0}
\setcounter{equation}{0}
\renewcommand{\thetable}{A.\arabic{table}}
\renewcommand{\thefigure}{A.\arabic{figure}}
\renewcommand{\theequation}{A.\arabic{equation}}

\setcounter{theorem}{0}
\setcounter{proposition}{0}
\renewcommand{\thetheorem}{A.\arabic{theorem}}
\renewcommand{\theproposition}{A.\arabic{proposition}}
\setcounter{lemma}{0}
\renewcommand{\thelemma}{A.\arabic{lemma}}

\setcounter{example}{0}
\renewcommand{\theexample}{A.\arabic{example}}

\setcounter{corollary}{0}
\renewcommand{\thecorollary}{A.\arabic{corollary}}

\setcounter{remark}{0}
\renewcommand{\theremark}{A.\arabic{remark}}
\setcounter{definition}{0}
\renewcommand{\thedefinition}{A.\arabic{definition}}

\section{Appendices}
\subsection{Proof of Proposition \ref{prop:infinite}}
\begin{proof}[Proof of Proposition \ref{prop:infinite}]
 Write $\bm {\theta}=\(\theta_1,\dots,\theta_n\)$ and $\bm {\eta} = \(\eta_1,\dots,\eta_n\)$. 
 Since  $\bm\th \cdot\bm X \ge_{\rm st} \bm\eta\cdot \bm X$
and  $\E[\bm\th \cdot\bm X]=\E[ \bm\eta\cdot \bm X]$, we have that $\bm\th \cdot\bm X  $ and $\bm\eta\cdot \bm X$ are identically distributed.

Let $X_1$ be the first component of $\bm X$. If $X_1$ has finite variance, denoted by $\sigma^2$, then by using $\var( \bm\th \cdot\bm X ) 
= \var(\bm\eta\cdot \bm X)$, 
we have $\sigma^2 \sum_{i=1}^n \theta_i^2= \sigma^2 \sum_{i=1}^n \eta_i^2$. Since $\bm\th \prec \bm\eta$, this implies $\sigma^2=0$. 

In case $ X_1 $ does not have finite second moment, we can use a strictly convex function $f$ given by $f(x) =\sqrt{x^2+1}$.
Since  $f(x)\le x+1$ for $x\ge 0$,  
 the  two expectations $\E[f(\bm\th \cdot\bm X)] $ and $ \E[f(\bm\eta\cdot \bm X)]$ are finite and equal. 
In view of Lemma \ref{lem:maj}, it suffices to consider the case that $\bm\th $ is a $T$-transform of $\bm\eta$. 
In this case,  
since $\bm \th \ne \bm \eta $, 
there exist $i,j\in [n]$
such that $\eta_i\ne \eta_j$ 
and $\eta_i<\theta_i,\theta_j<\eta_j$
and $\theta_i+\theta_j=\eta_i+\eta_j$. 
Note that $$
\E[f(Z+ \theta_i  X_i
+  \theta_j X_j 
)] = \E[f(Z+  \eta_i X_i +  \eta_j X_j )],
$$
where $Z=\sum_{k\in [n]\setminus\{i,j\}} \theta_k X_k  =\sum_{k\in [n]\setminus\{i,j\}} \eta_k X_k  $.  For all $a,b\in \R$ distinct, due to  strict convexity of $f$, we have 
$$f(Z+ \theta_i  a
+  \theta_j  b 
)+f(Z+ \theta_i  b
+  \theta_j  a 
)<f(Z+  \eta_i a +  \eta_j b )+f(Z+  \eta_i b +  \eta_j a ).$$
Moreover, we have 
$$
\E[f(Z+ \theta_i  X_i
+  \theta_j X_j 
)] = \frac 12 \E[f(Z+ \theta_i  X_i
+  \theta_j X_j 
)]+\frac 12 \E[f(Z+ \theta_i  X_j
+  \theta_j X_i 
)] .
$$
Using  the equality in expectation and independence between $Z,X_i,X_j$, we know that $X_i$ and $X_j$
cannot take distinct values, and they must be degenerate. 
\end{proof}

\subsection{Feller-Pareto random variables}\label{app:FP}

For $\mu\in\R$, $\sigma>0$, and $\gamma>0$, a random variable $W$ is called a \emph{Feller-Pareto} random variable if it has the following stochastic representation,
$$W=\mu+\sigma\(\frac{Z_1}{Z}\)^\gamma,$$
where $Z_1\sim \mathrm{Gamma}(\beta)$ and $Z\sim \mathrm{Gamma}(\alpha)$, $\beta,\alpha>0$, are independent. We write $W\sim {\rm FP} (\alpha, \beta, \mu, \sigma, \gamma)$. Note that $\E(W)=\infty$ if $\alpha\le \gamma$; we refer to \cite{A15} for distributional  properties of the Feller-Pareto family.

The class of Fell-Pareto distributions is very general and it includes Pareto distribution Type I ($FP(\alpha,1,\sigma,\sigma,1)$),   Pareto distribution Type II (${\rm FP}(\alpha,1,\mu,\sigma,1)$), Pareto distribution Type III (${\rm FP}(1,1,\mu,\sigma,\gamma)$), and Pareto distribution Type IV (${\rm FP}(\alpha, 1, \mu,\sigma,\gamma)$). Clearly, Pareto distributions Type I, II, and III are contained in the class of Pareto distributions Type IV with distribution function
$$F(x)=1-\(1+\(\frac{x-\mu}{\sigma}\)^{1/\gamma}\)^{-\alpha},~~~x> \mu.$$
The Feller-Pareto family also includes distributions that are distinctly non-Paretian in their tails; see, e.g., (3.2.17) of \cite{A15}. For our study, as first-order stochastic dominance is location invariant, it suffices to consider the case $\mu=0$ and $\sigma=1$ and we write ${\rm FP}(\alpha, \beta, \gamma) ={\rm FP}(\alpha,\beta,0,1,\gamma)$.

Let $Z_1,\dots,Z_n\sim \mathrm{Gamma}(\beta)$ and $Z\sim \mathrm{Gamma}(\alpha)$, $\beta,\alpha>0$, be independent. Define a random vector via the stochastic representation below
\begin{equation}\label{eq:MFP}
\(X_1,\dots,X_n\)=\(\(\frac{Z_1}{Z}\)^\gamma,\dots,\(\frac{Z_n}{Z}\)^\gamma\).\end{equation} 
Then $\(X_1,\dots,X_n\)$ follows a multivariate Feller-Pareto distribution, denoted by ${\rm MFP}(\alpha,\beta,\gamma)$, with common margins ${\rm FP}(\alpha, \beta, \gamma)$.  Clearly, $X_1,\dots,X_n$ are positively dependent as they are all affected by the common shock $Z^\gamma$. Similar to  ${\rm FP}(\alpha, \beta, \gamma)$, ${\rm MFP}(\alpha,\beta,\gamma)$ includes special cases of  multivariate Pareto distributions, from Type I to IV; the parameters are chosen  analogously to the univariate case. 
\begin{theorem}\label{thm:MFP}
Suppose that $\bm\th, \bm\eta\in \R_+^n$ satisfy $\bm\th \preceq \bm\eta$. Let $\bm X \sim {\rm MFP}(\alpha,\beta,\gamma)$,  $\alpha,\beta,\gamma>0$ and $\alpha\le \gamma$.  Then $\bm{\eta} \cdot\bm X \le_{{\rm st}} \bm{\theta} \cdot \bm X$.
\end{theorem}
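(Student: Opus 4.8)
The plan is to collapse the stochastic dominance claim into a single scalar concavity property by exploiting the common-shock representation \eqref{eq:MFP}. Writing $X_i=(Z_i/Z)^\gamma$, one has $\bm\theta\cdot\bm X=Z^{-\gamma}\sum_{i=1}^n\theta_i Z_i^\gamma$, where $Z\sim\mathrm{Gamma}(\alpha)$ is independent of the iid vector $(Z_1,\dots,Z_n)$ with $Z_i\sim\mathrm{Gamma}(\beta)$. First I would condition on $(Z_1,\dots,Z_n)$ and integrate out the scalar shock $Z$. Fix $t>0$, put $A_{\bm\theta}=\sum_{i=1}^n\theta_i Z_i^\gamma$, and let $G_\alpha$ denote the distribution function of $\mathrm{Gamma}(\alpha)$. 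Since $\gamma>0$, the event $\{Z^{-\gamma}A_{\bm\theta}>t\}$ coincides with $\{Z<(A_{\bm\theta}/t)^{1/\gamma}\}$, so that
\begin{equation*}
\p(\bm\theta\cdot\bm X>t)=\E\!\left[G_\alpha\!\left((A_{\bm\theta}/t)^{1/\gamma}\right)\right],
\end{equation*}
the expectation being over $Z_1,\dots,Z_n$. The task is therefore to compare $\E[\psi(\bm\theta\cdot\bm Y)]$ with $\E[\psi(\bm\eta\cdot\bm Y)]$, where $\psi(a):=G_\alpha\big((a/t)^{1/\gamma}\big)$ and $\bm Y=(Z_1^\gamma,\dots,Z_n^\gamma)$ has iid components.

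The decisive step is to prove that $\psi$ is concave on $(0,\infty)$, and this is exactly where the hypothesis $\alpha\le\gamma$ is used. Substituting $u=(a/t)^{1/\gamma}$ and differentiating the composition gives $\psi'(a)=u^{\alpha-\gamma}e^{-u}/(t\gamma\Gamma(\alpha))$, a positive $a$-independent constant times $u^{\alpha-\gamma}e^{-u}$. Because $u$ increases with $a$, concavity of $\psi$ is equivalent to $u\mapsto u^{\alpha-\gamma}e^{-u}$ being decreasing, and its logarithmic derivative equals $(\alpha-\gamma-u)/u$, which is negative for all $u>0$ precisely when $\alpha\le\gamma$. I expect this short computation to be the heart of the proof: it pins down $\alpha\le\gamma$ as the exact boundary of the infinite-mean regime in which diversification reverses direction.

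With concavity established, the conclusion follows from a standard majorization fact rather than any further Pareto-specific analysis. Since $Y_1,\dots,Y_n$ are iid, hence exchangeable, and $\bm\theta\preceq\bm\eta$, Theorem 3.A.35 of \cite{SS07} yields the convex-order relation $\bm\theta\cdot\bm Y\le_{\mathrm{cx}}\bm\eta\cdot\bm Y$. Convex order is reversed by concave functions, so $\E[\psi(\bm\theta\cdot\bm Y)]\ge\E[\psi(\bm\eta\cdot\bm Y)]$, i.e.\ $\p(\bm\theta\cdot\bm X>t)\ge\p(\bm\eta\cdot\bm X>t)$ for every $t>0$ and trivially for $t\le0$ as the portfolios are nonnegative. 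This is precisely $\bm\eta\cdot\bm X\le_{\rm st}\bm\theta\cdot\bm X$. A convenient feature of this route is that it handles all $n$ at once, sidestepping the $T$-transform and convolution induction needed for Theorem \ref{th:casen}.

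The residual difficulties are bookkeeping rather than conceptual. I would check that $A_{\bm\theta}>0$ almost surely so that the conditioning identity is valid and $\psi$ is only evaluated on $(0,\infty)$, and confirm the convex-order input applies: here $\E[Z_i^\gamma]=\Gamma(\beta+\gamma)/\Gamma(\beta)<\infty$, so $\bm\theta\cdot\bm Y$ and $\bm\eta\cdot\bm Y$ have finite and equal means and $\le_{\mathrm{cx}}$ is well-defined, even though the heavy tail created by dividing by $Z^\gamma$ renders each $X_i$ infinite-mean when $\alpha\le\gamma$. Since Theorem \ref{thm:Clayton} is the special case $\beta=\gamma=1$, it follows immediately.
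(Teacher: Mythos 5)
Your proposal is correct and is essentially the paper's own proof: the same common-shock conditioning reduces the claim to concavity of $a\mapsto G_\alpha((a/t)^{1/\gamma})$ under $\alpha\le\gamma$, followed by the convex-order comparison $\bm\theta\cdot\bm Y\le_{\mathrm{cx}}\bm\eta\cdot\bm Y$ from Theorem 3.A.35 of \cite{SS07}. The only difference is cosmetic (you verify concavity via monotonicity of the first derivative rather than the sign of the second derivative), and your added checks on finiteness of $\E[Z_i^\gamma]$ are a welcome bit of rigor.
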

\begin{proof}
We write $\bm X = ((Z_1/Z)^\gamma,\dots,(Z_n/Z)^\gamma)$ where  $Z_1,\dots,Z_n\sim \mathrm{Gamma}(\beta)$ and $Z\sim \mathrm{Gamma}(\alpha)$ are independent.    Let  $\bm  Z= (Z_1^\gamma,\dots,Z_n^\gamma)$.
 For $x>0$, we have 
 \begin{align*}
 \p(\bm{\eta} \cdot\bm X\ge x)&=\p\(\frac{\bm{\eta} \cdot \bm Z}{Z^\gamma}\ge x\)=\E\(\p\(Z^\gamma\le \frac{\bm{\eta} \cdot \bm Z}{x}|\bm{\eta} \cdot \bm Z\)\)=\E(\phi(\bm{\eta} \cdot \bm Z)),
 \end{align*}
where $\phi(y)=\p(Z^\gamma\le y/x)$, $y>0$. Next, we show $\phi$ is concave. Assume $x=1$ for simplicity. Taking the second derivative of $\phi$, we get 
$$\phi^{''}(y)=\frac{y^{\alpha/\gamma-2}\exp\(-y^{1/\gamma}\)(\alpha/\gamma-1-y^{1/\gamma}/\gamma)}{\gamma\Gamma(\alpha) }.$$
As $\alpha\le \gamma$, $\phi^{''}<0$ and hence $\phi$ is concave. For two random variables $X$ and $Y$, we say $X$ is smaller than $Y$ in the \emph{convex order}, denoted by $X\le_{\rm cx}Y$, if $\E(u(X))\le \E(u(Y))$ for all convex functions $u$ provided that the expectations exist. By Theorem 3.A.35 of \cite{SS07}, since  $Z_1,\dots,Z_n$ are iid, $\bm{\eta} \cdot \bm Z\ge_{\rm cx} \bm{\theta} \cdot \bm Z.$ By the concavity of $\phi$, 
$$ \p(\bm{\eta} \cdot\bm X\ge x)=\E(\phi(\bm{\eta} \cdot \bm Z))\le \E(\phi(\bm{\theta} \cdot \bm Z))=\p(\bm{\theta} \cdot\bm X\ge x).$$
The proof is done.
\end{proof}

\begin{remark}
If   $\bm X$ in Theorem \ref{thm:MFP}  has a stochastic representation \eqref{eq:MFP} where $Z_1,\dots,Z_n$ are exchangeable, Theorem \ref{thm:MFP} still holds. This is because in the proof of Theorem \ref{thm:MFP}, $\bm{\eta} \cdot \bm Z\ge_{\rm cx} \bm{\theta} \cdot \bm Z$   also holds for exchangeable $Z_1,\dots,Z_n$ (Theorem 3.A.35 of \cite{SS07}). \end{remark}

\section*{Funding}

T.~Hu would like to acknowledge financial support from National Natural Science Foundation of China (No.~72332007, 12371476).
R.~Wang is supported by the Natural Sciences and Engineering Council of Canada (RGPIN-2024-03728; CRC-2022-00141). 
Z.~Zou is supported by National Natural Science Foundation of China (No. 12401625), the China Postdoctoral Science Foundation (No. 2024M753074), the Postdoctoral Fellowship Program of CPSF (GZC20232556), and the Fundamental Research Funds for the Central Universities (No. WK2040000108).

\section*{Disclosure statement}

No potential conflict of interest was reported by the authors.

\end{document}